\def\year{2022}\relax
\def\year{2022}
\newlength\titlebox \setlength\titlebox{2.25in}
\def\addcontentsline#1#2#3{}
\def\copyright@year{\number\year}
\def\copyright@text{Copyright \copyright\space \copyright@year,
Association for the Advancement of Artificial Intelligence (www.aaai.org).
All rights reserved.}
\def\copyright@on{T}
\def\nocopyright{\gdef\copyright@on{}} %
\def\copyrighttext#1{\gdef\copyright@on{T}\gdef\copyright@text{#1}}
\def\copyrightyear#1{\gdef\copyright@on{T}\gdef\copyright@year{#1}}
\def\maketitle{\par
\begingroup %
\def\thefootnote{\fnsymbol{footnote}}
\twocolumn[\@maketitle] \@thanks
\endgroup
\if T\copyright@on\insert\footins{\noindent\footnotesize\copyright@text}\fi
\setcounter{footnote}{0}
\let\maketitle\relax \let\@maketitle\relax
\gdef\@thanks{}\gdef\@author{}\gdef\@title{}\let\thanks\relax}
\long\gdef\affiliations #1{ \def \affiliations_{#1}}
\def\@maketitle{%
  \newcounter{eqfn}\setcounter{eqfn}{0}%
  \newsavebox{\titlearea}
  \sbox{\titlearea}{
    \let\footnote\relax\let\thanks\relax%
    \setcounter{footnote}{0}%
    \def\equalcontrib{%
      \ifnum\value{eqfn}=0%
        \footnote{These authors contributed equally.}%
        \setcounter{eqfn}{\value{footnote}}%
      \else%
        \footnotemark[\value{eqfn}]%
      \fi%
    }%
    \vbox{%
      \hsize\textwidth%
      \linewidth\hsize%
      \vskip 0.625in minus 0.125in%
      \centering%
      {\LARGE\bf \@title \par}%
      \vskip 0.1in plus 0.5fil minus 0.05in%
      {\Large{\textbf{\@author\ifhmode\\\fi}}}%
      \vskip .2em plus 0.25fil%
      {\normalsize \affiliations_\ifhmode\\\fi}%
      \vskip .5em plus 2fil%
    }%
  }%
  \newlength\actualheight%
  \settoheight{\actualheight}{\usebox{\titlearea}}%
  \ifdim\actualheight>\titlebox%
    \setlength{\titlebox}{\actualheight}%
  \fi%
  \vbox to \titlebox {%
    \let\footnote\thanks\relax%
    \setcounter{footnote}{0}%
    \def\equalcontrib{%
      \ifnum\value{eqfn}=0%
        \footnote{These authors contributed equally.}%
        \setcounter{eqfn}{\value{footnote}}%
      \else%
        \footnotemark[\value{eqfn}]%
      \fi%
    }%
    \hsize\textwidth%
    \linewidth\hsize%
    \vskip 0.625in minus 0.125in%
    \centering%
    {\LARGE\bf \@title \par}%
    \vskip 0.1in plus 0.5fil minus 0.05in%
    {\Large{\textbf{\@author\ifhmode\\\fi}}}%
    \vskip .2em plus 0.25fil%
    {\normalsize \affiliations_\ifhmode\\\fi}%
    \vskip .5em plus 2fil%
  }%
}%
\renewenvironment{abstract}{%
  \centerline{\bf Abstract}%
  \vspace{0.5ex}%
  \setlength{\leftmargini}{10pt}%
  \begin{quote}%
    \small%
}{%
  \par%
  \end{quote}%
  \vskip 1ex%
}%
\def\section{\@startsection {section}{1}{\z@}{-2.0ex plus
-0.5ex minus -.2ex}{3pt plus 2pt minus 1pt}{\Large\bf\centering}}
\def\subsection{\@startsection{subsection}{2}{\z@}{-2.0ex plus
-0.5ex minus -.2ex}{3pt plus 2pt minus 1pt}{\large\bf\raggedright}}
\def\subsubsection{\@startsection{subparagraph}{3}{\z@}{-6pt plus
-2pt minus -1pt}{-1em}{\normalsize\bf}}
\renewcommand\paragraph{\@startsection{paragraph}{4}{\z@}{-6pt plus -2pt minus -1pt}{-1em}{\normalsize\bf}}%
\def\footnoterule{\kern-3pt \hrule width 5pc \kern 2.6pt }
\def\@listi{\leftmargin\leftmargini}
\def\@listii{\leftmargin\leftmarginii
\labelwidth\leftmarginii\advance\labelwidth-\labelsep
\topsep 2pt plus 1pt minus 0.5pt
\parsep 1pt plus 0.5pt minus 0.5pt
\itemsep \parsep}
\def\@listiii{\leftmargin\leftmarginiii
\labelwidth\leftmarginiii\advance\labelwidth-\labelsep
\topsep 1pt plus 0.5pt minus 0.5pt
\parsep \z@
\partopsep 0.5pt plus 0pt minus 0.5pt
\itemsep \topsep}
\def\@listiv{\leftmargin\leftmarginiv
\labelwidth\leftmarginiv\advance\labelwidth-\labelsep}
\def\@listv{\leftmargin\leftmarginv
\labelwidth\leftmarginv\advance\labelwidth-\labelsep}
\def\@listvi{\leftmargin\leftmarginvi
\labelwidth\leftmarginvi\advance\labelwidth-\labelsep}
\def\normalsize{\@setfontsize\normalsize\@xpt{11}}   %
\def\small{\@setfontsize\small\@ixpt{10}}    %
\def\footnotesize{\@setfontsize\footnotesize\@ixpt{10}}  %
\def\scriptsize{\@setfontsize\scriptsize\@viipt{10}}  %
\def\tiny{\@setfontsize\tiny\@vipt{7}}    %
\def\large{\@setfontsize\large\@xipt{12}}    %
\def\Large{\@setfontsize\Large\@xiipt{14}}    %
\def\LARGE{\@setfontsize\LARGE\@xivpt{16}}    %
\def\huge{\@setfontsize\huge\@xviipt{20}}    %
\def\Huge{\@setfontsize\Huge\@xxpt{23}}    %
    \let\cite\citep
    \let\shortcite\citeyearpar
\newcommand{\abbExtraSpace}{}
\newcommand\eg{e.\abbExtraSpace g.,\xspace}
\newcommand\ie{i.\abbExtraSpace e.,\xspace}
\newcommand\cf{cf.\xspace}
\newcommand\wrt{w.\abbExtraSpace r.\abbExtraSpace t.\xspace}
\newcommand{\opnotation}[1]{\text{{\sf #1}}}
\renewcommand{\vec}[1]{\boldsymbol{#1}}
\newcommand{\tuple}[1]{\langle{#1}\rangle}
\newcommand\arity{\opnotation{ar}}
\newcommand\range[1]{\opnotation{range}(#1)}
\newcommand\rul[1][]{\ensuremath{r_{#1}}\xspace}
\newcommand\head[1]{\opnotation{head}(#1)} %
\newcommand\body[1]{\opnotation{body}(#1)} %
\newcommand\varsOf[1]{\opnotation{Var}(#1)} %
\newcommand\exVarsOf[1]{\opnotation{Var}_\exists(#1)} %
\newcommand\uniVarsOf[1]{\opnotation{Var}_\forall(#1)} %
\newcommand\Inst[1][I]{\ensuremath{\mathcal{#1}}\xspace}
\newcommand\Jnst{\Inst[J]}
\newcommand\Dnst{\Inst[D]}
\newcommand\core[1]{\ensuremath{\mathsf{core}(#1)}\xspace}
\newcommand{\nmrnot}{\text{{\bf not}\,}} %
\newcommand\BCQ{BCQ\xspace}
\newcommand\NBCQ{BNCQ\xspace}
\newcommand\BNCQ{\NBCQ}
\newcommand\NBCQs{\NBCQ{s}\xspace}
\newcommand\BNCQs{\NBCQs}
\newcommand\bodyp[1]{\ensuremath{\opnotation{body}^+(#1)\xspace}}
\newcommand\bodyn[1]{\ensuremath{\opnotation{body}^-(#1)\xspace}}
\newcommand\positive[1]{\ensuremath{{#1}^{+}}\xspace}
\newcommand\negative[1]{\ensuremath{{#1}^{-}}\xspace}
\newcommand\vectornotation[1]{\ensuremath{\mathbf{#1}}}
\newcommand\setnotation[1]{\ensuremath{\mathbf{#1}}}
\newcommand\posrely{\ensuremath{\mathrel{\prec^{+}}}\xspace}
\newcommand\negrely{\ensuremath{\mathrel{\prec^{-}}}\xspace}
\newcommand\restrain{\ensuremath{\mathrel{\prec^{\Box}}}\xspace}
\newcommand\lapr{\ensuremath{\mathrel{\preceq}}\xspace}
\newcommand\napr{\ensuremath{\mathrel{\not\approx}}\xspace}
\newcommand\splitt[2][]{\ensuremath{\mathsf{split}_{#1}(#2)}\xspace}
\newcommand\mergg[2][]{\ensuremath{\mathsf{merge}_{#1}(#2)}\xspace}
\newcommand\tlist[1]{\vectornotation{#1}\xspace}
\newcommand\nullS{\setnotation{N}\xspace}
\newcommand\constS{\setnotation{C}\xspace}
\newcommand\predS{\setnotation{P}\xspace}
\newcommand\varS{\setnotation{V}\xspace}
\tikzset{
  every place/.style=
    {
      circle,
      draw,
      thick,
      inner sep=3pt,
      minimum size=7mm
    },
  every transition/.style=
    {
      rectangle,
      draw,
      thick,
      inner sep=3pt,
      minimum size=7mm
    },
  edge/.style=
    {
      ->,
      shorten <=1pt,
      >=stealth',
      semithick
    },
  pre/.style=
    {
      <-,shorten <=.1pt,>=stealth',semithick
    },
  pres/.style=
    {
      <-
    },
  post/.style=
    {
      ->,shorten >=.1pt,>=stealth',semithick
    },
  posts/.style=
    {
      ->
    },
  state/.style=
    {
      circle,draw,semithick,inner sep=.1pt,minimum size=1.5mm,fill=black
    },
  entity/.style=
    {
      rounded corners=3,
      draw,
      semithick,
      inner sep=.5em,
      minimum size=2em
    }
}
\newtheorem{definition}{Definition}
\newtheorem{example}{Example}
\newtheorem{proposition}{Proposition}
\newtheorem{lemma}{Lemma}
\newtheorem{theorem}{Theorem}
\newtheorem{corollary}{Corollary}
\newcommand\arXiv{
  \excludecomment{aaai}
  \includecomment{arxiv}
  \includecomment{arxiv1}
}
  \title{Answering Queries with Negation over Existential Rules}
  \title{Answering Queries with Negation over Existential Rules\thanks{This paper is the technical report for the paper with the same title appearing at the 36th AAAI Conference on Artificial Intelligence (AAAI 2022).}}
\author {
Stefan Ellmauthaler,
Markus Krötzsch,
Stephan Mennicke
}
  \newcommand\href[2]{#2\xspace}
\begin{document}

\maketitle

\begin{abstract}
Ontology-based query answering with existential rules is well understood and implemented for
positive queries, in particular conjunctive queries.
For queries with negation, however, there is no agreed-upon semantics or standard implementation.
This problem is unknown for simpler rule languages, such as Datalog, where it is intuitive 
and practical to evaluate negative queries over the least model.
This fails for existential rules, which instead of a single least model
have multiple \emph{universal models} that may not lead to the same results for negative queries.
We therefore propose \emph{universal core models} as a basis for a meaningful (non-monotonic) semantics for queries 
with negation. 
Since cores are hard to compute, we identify syntactic conditions (on rules and queries) under which our core-based semantics
can equivalently be obtained for other universal models, such as those produced by practical chase algorithms.
Finally, we use our findings to propose a semantics for
a broad class of existential rules with negation.
\end{abstract}

\section{Introduction}\label{sec_intro}

Existential rules are a prominent approach in knowledge representation, due to theoretical and practical advances in ontology-based query answering~\cite{BagetLMS09,podsdissem}, but also because of applications in many other domains, such as data exchange and integration~\cite{FaginKMP05}.
Answering conjunctive queries (CQs) over sets of existential rules is often the main goal,
where a CQ is entailed if it is satisfied by all models of the given rules.
This is often implemented using \emph{universal models},
which is a class of models each representing all positive query answers, such that one such model
suffices to answer CQs~\cite{DNR08:corechase}.

However, for queries that ask for the absence of facts (\ie queries incorporating negated atoms), universal models cannot
be used because different universal models may yield different query answers.
This is a major problem, since negation is an important feature in real-world queries, and
a semantic prerequisite for supporting negation in rule bodies, which can be viewed as a recursive
generalisation of negative queries. The severity of this limitation is further aggravated by the fact that
no such problems exist for Datalog, where negative queries can safely be evaluated over the unique least model.
This can be further generalised by considering \emph{stratified negation} in rule bodies, which yields an 
intuitive and implementable non-monotonic semantics \cite{Alice}.
But even this basic form of negation is not safe to use with existential rules.

What makes this problem challenging is that we are looking for a non-monotonic form of negation, where
the absence of positive evidence is sufficient to entail negative information.
Indeed, under the classical first-order semantics of negation, no negated queries are ever entailed by a set of
existential rules. However, it is not immediate when a non-monotonic semantics should allow for
additional, non-classical consequences, and several distinct approaches have been proposed for
existential rules \cite{MKH13:reliances,Baget+14:asp-exists-nmr,GottlobHKL14:ExistRulesWfs,AlvianoMP17:circum-tgds,KR20:cores}.
Our goal for this paper is to provide a semantics that agrees with the widely used and generally accepted
semantics of Datalog with stratified negation for rules without existentials, but which also respects
the classical reading of existential quantifiers as mere statements of existence of certain elements without any commitment
to their identity (in contrast to logic programming approaches that use function terms for referring to distinguished objects).

We therefore propose to evaluate negations with respect to universal models that are \emph{cores},
an algebraic property that, intuitively speaking, ensures that they contain no redundant structures.
Many good practical and theoretical results have been obtained for core models~\cite{FaginKP05:core,DNR08:corechase,CKMOR18:cores,KR20:cores},
but computing cores from arbitrary structures is expensive~\cite{HN92:core}.

\newcommand\xstep{1.6cm}
\newcommand\ystep{1.65cm}
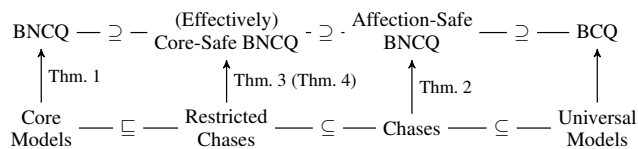
\begin{figure}[tb]
  \centering
  \scalebox{.77}{\begin{tikzpicture}[node distance=.9 and .6]

      \node[align=center] (cores) at (0,0) {Core\\ Models};
      \node[align=center] (subcores) at (\xstep-.10cm,0) {\ensuremath{\sqsubseteq}}
        edge[semithick] (cores);
      \node[align=center] (restricted) at (2*\xstep,0) {Restricted\\ Chases}
        edge[semithick] (subcores);
      \node[align=center] (skolem) at (3*\xstep+.15cm,0) {\ensuremath{\subseteq}}
        edge[semithick] (restricted);
      \node[align=center] (chases) at (4*\xstep,0) {Chases}
        edge[semithick] (skolem);
      \node[align=center] (helper) at (5*\xstep-.05cm,0) {\ensuremath{\subseteq}}
        edge[semithick] (chases);
      \node[align=center] (universals) at (6*\xstep,0) {Universal\\ Models}
        edge[semithick] (helper);

      \node (bcqneg) at (0,\ystep) {\BNCQ}
        edge[pre] node[auto] {\footnotesize Thm.~\ref{prop_coreEntailmentMinimal}} (cores);
      \node[align=center] (h) at (\xstep-.32cm,\ystep) {\ensuremath{\supseteq}}
        edge[semithick] (bcqneg);
      \node[align=center] (coresafe) at (2*\xstep,\ystep) {(Effectively)\\Core-Safe \BNCQ}
        edge[pre] node[auto] {\footnotesize Thm.~\ref{thm:core-safety} (Thm.~\ref{thm:effectively-core-safe})} (restricted)
        edge[semithick] (h);
      \node[align=center] (hh) at (3*\xstep+.12cm,\ystep) {\ensuremath{\supseteq}}
        edge[semithick] (coresafe);
      \node[align=center] (affsafe) at (4*\xstep,\ystep) {Affection-Safe\\\BNCQ}
        edge[semithick] (hh)
        edge[pre] node[auto] {\footnotesize Thm.~\ref{thm:affection-safe}} (chases);
      \node[align=center] (hhh) at (5*\xstep+.30cm,\ystep) {\ensuremath{\supseteq}}
        edge[semithick] (affsafe);
      \node[align=center] (bcq) at (6*\xstep,\ystep) {\BCQ}
        edge[pre] node[auto] {} (universals)
        edge[semithick] (hhh);
    \end{tikzpicture}}
  \caption{Summary of Results}
  \label{fig:summary}
\end{figure}
 We therefore ask if our core-based semantics for negation can also be computed using
other kinds of universal models, which are easier to compute in practice.
It turns out that this is possible if we restrict the shape of the queries that we want to answer,
and we define several classes of ``safe'' queries that can be evaluated over broader classes of models.
Our results are summarised in Figure~\ref{fig:summary}, which illustrates the
inverse relationship between the generality of the query language and the
specificity of the models that one can use to compute them.
The (restricted) \emph{chase} refers to a widely implemented type of reasoning algorithm,
such that the respective query classes could be evaluated in practice~\cite{N+15:RDFoxToolPaper,Benedikt+17:ChaseBench,BSG:Vadalog18,VLog4j2019}.
The symbol $\sqsubseteq$ expresses that core models can be embedded into every restricted chase,
although no such chase is necessarily equal to the core (hence no $\subseteq$).
For any of the given query fragments, we provide concrete syntactic definitions that can be decided
in practice.

Our final contribution is the extension of our approach to existential rules with negations in their bodies.
This is technically more challenging than mere query answering, and even the eager computation of cores
during reasoning cannot guarantee a unique semantics \cite{KR20:cores}.
We focus on cases where rules can be stratified in a certain sense, but our conditions significantly generalise
classical stratification \cite{Alice} and the recent \emph{full stratification} for existential rules \cite{KR20:cores}.
Nevertheless, we can still find a unique ``perfect'' model that can be used in query answering, 
making our approach a valid generalisation of the \emph{perfect core model semantics} \cite{KR20:cores}.

\begin{aaai}
  All proofs are included in the technical report of the paper~\cite{arXivReport2021}.
\end{aaai}
\section{Preliminaries}\label{sec_prelims}

We consider a first-order signature with disjoint sets of \emph{constants} \constS, \emph{(labelled) nulls} \nullS, \emph{variables} \varS, and \emph{predicates} \predS.
A \emph{term} $t$ is an element of $\constS\cup\nullS\cup\varS$.
Lists of terms are denoted $\vec{t}=t_1,t_2,\ldots,t_n$ with $n=|\vec{t}|$, and treated as sets when order is irrelevant.
Each predicate $p\in\predS$ has an \emph{arity} $\arity(p)\in\mathbb{N}$.
An \emph{atom} is an expression $p(\vec{t})$ with $p\in\predS$ and $\arity(p)=|\vec{t}|$.
An atom $p(\vec{t})$ is \emph{ground} if $\vec{t}\subseteq\constS$.
An \emph{interpretation} \Inst is a set of atoms without variables. A database \Dnst is a finite set of ground atoms. %

\paragraph*{Rules and Queries}
An \emph{(existential) rule} $\rul$ is a formula
\begin{align}
\rul = \forall \vec{x}, \vec{y}.\ \varphi[\vec{x}, \vec{y}] \to
\exists \vec{z}.\ \psi[\vec{y}, \vec{z}],\label{eq_rule}
\end{align}
where $\varphi$ and $\psi$ are conjunctions of atoms using only terms from $\constS$ or from the mutually disjoint lists of variables $\vec{x}, \vec{y}, \vec{z}\subseteq\varS$.
We call $\varphi$ the \emph{body} (denoted $\body{\rul}$), $\psi$ the \emph{head} (denoted $\head{\rul}$), and $\vec{y}$ the \emph{frontier} of $\rul$. %
For ease of notation we may treat conjunctions of atoms as sets, and we omit universal quantifiers in rules.
We require that all variables in $\vec{y}$ actually occur in $\varphi$ (\emph{safety}).\footnote{This requirement can be relaxed, but it simplifies presentation.}
A rule is \emph{Datalog} if %
it has no existential quantifiers.

A \emph{normal Boolean conjunctive query} (\BNCQ) is a formula $q = \exists\vec x.\varphi\wedge \psi$, where 
$\varphi$ is a conjunction of atoms with variables from $\vec{x}$, and
$\psi$ is a conjunction of negated atoms $\nmrnot p(\vec{t})$ using only variables that occur in $\varphi$ (\emph{safety}).
We write $q^+$ ($q^-$) for the set of all non-negated atoms in $\varphi$ ($\psi$).
If $q^-=\emptyset$, then $q$ is a \emph{boolean conjunctive query} (BCQ).

\paragraph*{Models and Entailment}
Given a set of atoms $\mathcal{A}$ and an interpretation $\Inst$,
a \emph{homomorphism} $h:\mathcal{A}\to\Inst$ is a function that maps the terms occurring in $\mathcal{A}$
to (the variable-free) terms occurring in $\Inst$,
such that:
(i) for all $c\in\constS$, $h(c)=c$;
(ii) for all $p\in\predS$, $p(\vec{t})\in\mathcal{A}$ only if $p(h(\vec{t}))\in\Inst$, where $h(\vec{t})$ is the list of $h$-images of the terms $\vec{t}$.
If $h$ satisfies (ii) with ``only if'' strengthened to ``iff'', $h$ is a \emph{strong homomorphism}.
An \emph{embedding} is an injective strong homomorphism, and an \emph{isomorphism} is a bijective strong homomorphism (\ie surjective embedding).
We apply homomorphisms to a formula by applying them individually to all of its atoms.

A \emph{match} of a rule $\rul$ in an interpretation $\Inst$ is
a homomorphism $h : \body{\rul}\to\Inst$.
A match $h$ of $\rul$ in $\Inst$ is \emph{satisfied} if there is a homomorphism $h':\head{\rul}\to\Inst$, such that $h\subseteq h'$.
We call $h'$ an \emph{extension of $h$}.
Rule $\rul$ is \emph{satisfied} by $\Inst$, written $\Inst\models\rul$, if every match of $\rul$ in $\Inst$ is satisfied.
A set of rules $\Sigma$ is satisfied by $\Inst$, written $\Inst\models\Sigma$, if $\Inst\models\rul$ for all $\rul\in\Sigma$.
We call \Inst a \emph{model} of $\Sigma$ and database \Dnst if $\Inst\models\Sigma$ and $\Dnst\subseteq\Inst$.
A \BNCQ $q$ is \emph{satisfied} by $\Inst$, written $\Inst\models q$, if
there is a homomorphism $h : q^+\to\Inst$ with $h(q^-)\cap\Inst=\emptyset$.
Note, a \BNCQ $q$ with $q^+\cap q^- \neq\emptyset$ can never be satisfied.
Herein we consider only \BNCQs $q$ that are \emph{non-trivial} in this sense.

\paragraph*{Universal Models and the Chase}
A model $\Inst[U]$ of \Dnst and $\Sigma$ is \emph{universal} if there is a homomorphism $\Inst[U]\to\Inst[M]$ for every model $\Inst[M]$ of $\Dnst$ and $\Sigma$.
Universal models can be computed with \emph{the chase}~\cite{DNR08:corechase}.
In this paper, we will mainly deal with the \emph{restricted chase} (also known as \emph{standard chase}).
\begin{definition}\label{def:restricted-chase}
  Let \Dnst be a database and $\Sigma$ a set of rules.
  A sequence $\Dnst^0,\Dnst^1,\ldots$ is called a \emph{(restricted) chase sequence of \Dnst and $\Sigma$} iff
  \begin{enumerate}
    \item $\Dnst^0 = \Dnst$,
    \item for every $\Dnst^{i+1}$ ($i\geq 0$) there is a rule $\rul\in\Sigma$ of the form \eqref{eq_rule} and a
    match $h$ for \rul in $\Dnst^i$ with
    \begin{enumerate}
      \item $h$ is an unsatisfied match for \rul in $\Dnst^i$ and
      \item there is an extension $h^\star : \head{\rul} \to \Dnst^{i+1}$ of $h$, such that $h^\star(z)$ is a fresh null for each $z\in\vec z$, and
    \end{enumerate}
    \item if $h$ is a match for some rule $\rul\in\Sigma$ in $\Dnst^i$ ($i\geq 0$), then $h$ is satisfied in some $\Dnst^j$ with $j\geq i$ (fairness).
  \end{enumerate}
  $\rul$ is \emph{applicable in $\Dnst^i$} if it has an unsatisfied match in $\Dnst^i$.
  Chase sequence $\Dnst^0, \Dnst^1, \ldots, \Dnst^k$ is called \emph{terminating} if $\Dnst^k$ is a model of $\Sigma$ and $\Dnst$.
  We call $\Dnst^\infty := \bigcup_{i\geq 0} \Dnst^i$ a \emph{chase of $\Sigma$ and \Dnst}.
\end{definition}
Other chase variants, like the Skolem/semi-oblivious or the oblivious chase, mainly differ in the respective definitions of rule applicability (2.~(a) in Definition~\ref{def:restricted-chase}).

\paragraph*{Influence of Existential Quantifiers}
For some of the upcoming notions, it is useful to have rule sets $\Sigma$ where each rule makes use of a distinct set of variables than all the other rules.
We say that $\Sigma$ is \emph{renamed-apart}.
Every rule set can be rewritten to an equivalent renamed-apart rule set.

The \emph{positions} in a predicate $p\in\predS$ are the pairs $\tuple{p,1},\ldots,\tuple{p,\arity(p)}$,
and we will refer to terms \emph{at} a certain position in an atom, set of atoms, or other formula.
$\varsOf{X}$ ($\exVarsOf{X}$/$\uniVarsOf{X}$) denotes the set of all
(existentially quantified/universally quantified) variables in a rule or rule set $X$.
We reproduce the following definition from \citeauthor{KR11:jointacyc}~\shortcite{KR11:jointacyc} to obtain a structure allowing for syntactic reasoning of the influence of existential quantifiers within the chase.

\begin{definition}\label{def_jagraph}
Let $\Sigma$ be a renamed-apart rule set.
For $x\in\varsOf{\Sigma}$, let $\Pi^B_x$ ($\Pi^H_x$) be the sets of all positions
at which $x$ occurs in the body (head) of a (necessarily unique) rule in $\Sigma$.
If $x\in\exVarsOf{\Sigma}$, then
$\Omega_x$ is the smallest set of positions such that
(1) $\Pi^H_x\subseteq\Omega_x$ and
(2) for all $y\in\uniVarsOf{\Sigma}$, $\Pi^B_y\subseteq\Omega_x$ implies $\Pi^H_y\subseteq\Omega_x$.
The set of \emph{jointly affected} positions is $\bigcup_{x\in\exVarsOf{\Sigma}}\Omega_x$.
For $x,y\in\exVarsOf{\Sigma}$,
we write $x\leadsto y$ if the (unique) rule of $y$ has a frontier variable $z$ with $\Pi^B_z\subseteq\Omega_{x}$.
\end{definition}

The relation $x\leadsto y$ states that nulls created for $x$ potentially enable rule applications that create nulls for $y$.
Of particular interest is the transitive and reflexive closure $\leadsto^*$.
Starting from a set of existentially quantified variables, we seek all positions at which a null might occur that is directly or indirectly influenced by a variable in this set.

\begin{definition}\label{def:Vinfluenced}
Let $\Sigma$ be a renamed-apart rule set and let $V\subseteq\exVarsOf{\Sigma}$.
A position $\pi$ is \emph{$V$-influenced} if there are $x\in V$ and $y$ with $x\leadsto^* y$ such that
$\pi\in\Omega_y$.
\end{definition}

Note that \emph{jointly affected} is the same as $\exVarsOf{\Sigma}$-influenced.
Virtually any chase variant (restricted, Skolem, and oblivious) is \emph{compatible} with jointly affected positions, meaning that nulls in chases $\Dnst^\infty$ only occur in $\exVarsOf{\Sigma}$-influenced positions.
\begin{proposition}\label{prop:chase-ja-compatibility}
  For renamed-apart rule set $\Sigma$ and database $\Dnst$ with (Skolem/oblivious/restricted) chase $\Dnst^\infty$, if $p(t_1,\ldots,t_{\arity(p)})\in\Dnst^\infty$ and $t_i\in\nullS$ ($1\leq i\leq \arity(p)$), then $\tuple{p,i}$ is $\exVarsOf{\Sigma}$-influenced.
\end{proposition}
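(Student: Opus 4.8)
The plan is an induction on the chase sequence $\Dnst^0,\Dnst^1,\ldots$, but carried out with a hypothesis strictly stronger than the statement of the proposition. Precisely, I would prove: for every stage $\Dnst^i$ and every null $n$ occurring in $\Dnst^i$, there exists $x_n\in\exVarsOf{\Sigma}$ such that \emph{every} position at which $n$ occurs in $\Dnst^i$ lies in $\Omega_{x_n}$ (with $\Omega_{x_n}$ as in Definition~\ref{def_jagraph}). The proposition is then immediate: $\Omega_{x_n}$ is contained in $\bigcup_{x\in\exVarsOf{\Sigma}}\Omega_x$, which is by definition the set of jointly affected positions, and jointly affected coincides with $\exVarsOf{\Sigma}$-influenced; moreover every atom of $\Dnst^\infty$ already appears in some finite $\Dnst^k$ since the chase is monotone, so a null at $\tuple{p,i}$ in $\Dnst^\infty$ sits at a position in $\Omega_{x_n}$.

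For the base case, $\Dnst^0=\Dnst$ is a database with no nulls, so there is nothing to show. For the step, I would consider $\Dnst^{i+1}$ obtained from $\Dnst^i$ by a rule $\rul=\varphi[\vec x,\vec y]\to\exists\vec z.\,\psi[\vec y,\vec z]$ with match $h:\body{\rul}\to\Dnst^i$ and extension $h^\star$ mapping each $z\in\vec z$ to a fresh null. Nulls already present in $\Dnst^i$ keep their $x_n$, and their occurrences in old atoms are handled by the hypothesis, so only the new atoms $h^\star(\head{\rul})$ need attention. A term in a head atom $p(\vec t)$ is a constant (contributing no null), a frontier variable $t_j\in\vec y$, or an existential variable $t_j\in\vec z$. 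If $t_j\in\vec z$, put $x_n:=t_j$ for the fresh null $n:=h^\star(t_j)$; by freshness $n$ occurs in $\Dnst^{i+1}$ only within $h^\star(\head{\rul})$, hence only at positions in $\Pi^H_{t_j}\subseteq\Omega_{t_j}$. If $t_j\in\vec y$, then by safety $t_j$ occurs in $\body{\rul}$, so $n:=h^\star(t_j)=h(t_j)$ is a null already in $\Dnst^i$, occurring there at all positions of $\Pi^B_{t_j}$ (because $h$ maps the body into $\Dnst^i$); the hypothesis gives $\Pi^B_{t_j}\subseteq\Omega_{x_n}$, and condition~(2) of Definition~\ref{def_jagraph} then yields $\Pi^H_{t_j}\subseteq\Omega_{x_n}$, which contains the position of $t_j$ in $p(\vec t)$.

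The three chase variants (restricted, Skolem/semi-oblivious, oblivious) differ only in the applicability condition~2(a) of Definition~\ref{def:restricted-chase}; the argument uses only that $\Dnst^0$ is null-free, that the sequence is increasing, and that each step adds the atoms $h^\star(\head{\rul})$ for a body match extended by fresh nulls on $\vec z$, so it is uniform across all of them (in the Skolem chase, the null introduced for $z$ is a Skolem term tied to the single variable $z$, so $x_n$ is still well defined). The one genuinely delicate point — and the reason a direct induction on the bare statement fails — is that the closure condition~(2) defining $\Omega_x$ is formulated per existential variable rather than for the union $\bigcup_x\Omega_x$; hence one must carry along, for each null, a \emph{fixed} witnessing $x_n$ so that the frontier-propagation step can invoke condition~(2) for that particular $\Omega_{x_n}$.
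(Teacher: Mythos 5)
Your proof is correct and takes essentially the same route as the paper's: an induction along the chase showing that each null, introduced for some existential variable $z$, occurs only at positions of $\Omega_z$, with frontier variables propagating membership via closure condition~(2) of Definition~\ref{def_jagraph} and the variant-independence following from the fact that only the shape of a chase step (body match plus fresh nulls for $\vec z$) is used. Your strengthened invariant --- a fixed witness $x_n$ per null covering all of its occurrences --- simply makes explicit the step the paper compresses into ``by an inductive argument, $\Pi^B_{u_i}\subseteq\Omega_z$''.
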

\begin{aaai}
  The proof uses an over-approximation of the chase procedure~\cite{arXivReport2021}.
\end{aaai}
\begin{arxiv}
\begin{proof}
  We prove the claim for a slight adaptation of the restricted chase, removing condition 2.~(a) from Definition~\ref{def:restricted-chase} to obtain the required generality.
  In fact, the resulting chase procedure is an upper bound for all the chase variants.
  Let $\Dnst^\infty$ be such a chase with $p(t_1,\ldots,t_n)\in\Dnst^\infty$ and $t_i\in\nullS$ ($1\leq i\leq n$).
  Then $p(t_1,\ldots,t_n)$ must have been introduced at some chase step $\Dnst^{k}$ by rule $\rul[k]\in\Sigma$ with $p(u_1,\ldots,u_n)\in\head{\rul}$ and match $h_k$ ($h_k^\star(u_i)=t_i$).
  Furthermore, $t_i$ must have been introduced at some chase step $\Dnst^{k_0}$ ($k_0\leq k$) by some rule $\rul[0]$ with existential variable $z\in\exVarsOf{\rul[0]}$ at position $\tuple{q,j}$, and match $h_0$ with $h_0^\star(z)=t_i$.
  Since $t_i$ eventually occurs at position $\tuple{p,i}$, $k-k_0$ is an upper bound for the number of rules different from $\rul$ to have carried $t_i$ from $\Dnst^{k_0}$ to $\Dnst^{k}$.
  If $k=k_0$, then $\rul[0]=\rul$ and $u_i=z$.
  By $z\leadsto^* z$ and $\tuple{p,i}\in\Omega_z$ it follows that $\tuple{p,i}$ is $\exVarsOf{\Sigma}$-influenced.
  If $k>k_0$, then there is a maximal set of atoms $\{ r_1(\vec {v_1}),\ldots,r_m(\vec {v_m}) \} \subseteq \body{\rul}$ with $u_i$ occurring at positions in $\{ \tuple{r_1, j_1},\ldots,\tuple{r_m,j_m} \} = \Pi_{u_i}^B$.
  By an inductive argument, it holds that $\Pi_{u_i}^B\subseteq\Omega_z$, which implies that $\Pi_{u_i}^H\subseteq \Omega_z$.
  Finally, $\tuple{p,i}\in\Pi_{u_i}^H\subseteq \Omega_z$ and, therefore, $\tuple{p,i}$ is jointly affected.
  Thus, $\tuple{p,i}$ is $\exVarsOf{\Sigma}$-influenced.
\end{proof}
\end{arxiv}
\section{Answering \BNCQs on Core Models}\label{sec_cores}

A BCQ is entailed by rule set $\Sigma$ and database $\mathcal{D}$ if it is satisfied
by all models of $\Sigma$ and $\mathcal{D}$. Moreover, every universal model of $\Sigma$ and $\mathcal{D}$
satisfies exactly the BCQs that are entailed in this sense.
Neither condition is appropriate to define entailment of \BNCQs that may use negation.
On the one hand, no such query is satisfied in all models since there is always a model where every BCQ is true.
On the other hand, different universal models do not satisfy the same \BNCQs.
\begin{example}\label{ex:universal-counterexample}
  Take for instance the database $\Dnst = \{ a(1,2), b(2,2) \}$ together with an empty rule set.
  Of course, \Dnst is a universal model, but so is $\Inst[U] := \Dnst \cup \{ a(1,n) \}$ (where $n\in\nullS$).
  As mentioned above, all BCQs evaluate with the same result on \Dnst or \Inst[U].
  However, for \BNCQ $q = \exists x, y .\ a(x,y) \wedge \neg b(y,y)$ we get $\Inst[U]\models q$ while $\Dnst\not\models q$.
\end{example}

Models that are \emph{cores} have been suggested as the appropriate model for defining the semantics of queries that may depend on negative information~\cite{DNR08:corechase,Baget+14:asp-exists-nmr}.
This suggestion is substantiated by the fact that cores are unique up to isomorphism and that isomorphic structures cannot be distinguished by first-order (FO) queries (see, \eg the \emph{Isomorphism Lemma} by \citeauthor{Ebbinghaus1994}~\shortcite{Ebbinghaus1994}).
Indeed, if a universal model is a core, then it satisfies fewer \BNCQs than any other model (see also Theorem~\ref{prop_coreEntailmentMinimal} below),
resulting in the most ``cautious'' notion of non-monotonic entailment.

\begin{definition}
A finite interpretation $\Inst$ is a \emph{core} if every homomorphism $h:\Inst\to\Inst$ is an isomorphism.
A \emph{core model} of a rule set $\Sigma$ and a database $\mathcal{D}$ is a
finite universal model of $\Sigma$ and $\mathcal{D}$ that is a core.
\end{definition}

If $\Sigma$ and $\mathcal{D}$ have any finite universal model, then
they admit a finite core, which is unique up to isomorphism~\cite{DNR08:corechase}.
The situation is more complicated on infinite structures, which admit several
non-equivalent definitions of core~\cite{Bauslaugh95:core-like} and where core models might fail to exist altogether~\cite{CKMOR18:cores}.
We therefore focus on cases with finite models $\Inst[M]$ and their unique cores, which we denote $\core{\Inst[M]}$.
Many conditions have been studied to recognise cases where finite universal models are guaranteed to exist~\cite{CG+13:acyclicity}.
In the rest of the paper we study the notion of \emph{core entailments} for \BNCQs, being those entailments we obtain when evaluating the query on \emph{the} core model.
\begin{definition}\label{def_coreEntailment}
A rule set $\Sigma$ and database $\mathcal{D}$ \emph{core-entail} a \BNCQ $q$, written $\Sigma,\mathcal{D}\models_c q$, if $\Sigma$ and $\mathcal{D}$ have a core model $\mathcal{C}$ satisfying $q$.
\end{definition}

\paragraph*{Important Assumption} Throughout this paper, we only consider 
pairs of rule sets $\Sigma$ and databases $\Dnst$ that have a
(finite) core model. Definition~\ref{def_coreEntailment}
does not apply to other cases.\vspace{5pt}
Again considering Example~\ref{ex:universal-counterexample}, \Dnst is not just any universal model, but the core model.
Hence, $\Sigma,\Dnst\not\models_c q$ for the query $q$ in the example, as one might intuitively expect.
As shown in Example~\ref{ex:universal-counterexample}, such non-entailments are not preserved in arbitrary (universal) models, 
but it turns out that entailments are. 
This is a consequence of the fact that core models are embedded in all universal models in the following sense.
\begin{lemma}\label{lemma_}
  Let $\Sigma$ be a rule set and $\Dnst$ a database with core model $\Inst[C]$ and arbitrary universal model $\Inst[U]$.
  (1) If $h : \Inst[C]\to\Inst[U]$ is a homomorphism, it is an embedding.
  We call $h(\Inst[C])$ the \emph{core instance of $\Inst[U]$} and denote it by $\Inst[U]_c$.
  (2) Every homomorphism $h : \Inst[U]_c\to\Inst[U]_c$ is an isomorphism.
\end{lemma}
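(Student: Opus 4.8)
The key fact I would lean on is the rigidity of cores under self-maps (the defining property) together with the universality of $\Inst[U]$, which gives a homomorphism in the ``reverse'' direction. For part~(1), let $h : \Inst[C]\to\Inst[U]$ be a homomorphism. Since $\Inst[U]$ is universal, there is a homomorphism $g : \Inst[U]\to\Inst[C]$. Then $g\circ h : \Inst[C]\to\Inst[C]$ is a homomorphism from the core to itself, hence by definition of core it is an isomorphism; in particular it is injective. But if $g\circ h$ is injective, then $h$ must be injective as well. So $h$ is an injective homomorphism; it remains to upgrade ``homomorphism'' to ``strong homomorphism'' (to get an embedding). Suppose $p(h(\vec t))\in\Inst[U]$; I want $p(\vec t)\in\Inst[C]$. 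Applying $g$, we get $p(g(h(\vec t)))\in\Inst[C]$, i.e.\ $(g\circ h)(p(\vec t))$ is an atom of $\Inst[C]$. Since $g\circ h$ is an \emph{isomorphism} of $\Inst[C]$ — in particular a \emph{strong} homomorphism and surjective onto $\Inst[C]$ — the preimage atom $p(\vec t)$ must already lie in $\Inst[C]$. Hence $h$ reflects atoms, so $h$ is an embedding, and $h(\Inst[C])$ is an isomorphic copy of $\Inst[C]$ sitting inside $\Inst[U]$; this justifies the notation $\Inst[U]_c := h(\Inst[C])$.

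For part~(2), let $f : \Inst[U]_c\to\Inst[U]_c$ be any homomorphism. Fix an embedding $h : \Inst[C]\to\Inst[U]$ from part~(1), with image $\Inst[U]_c$, so $h : \Inst[C]\to\Inst[U]_c$ is an isomorphism; write $h^{-1} : \Inst[U]_c\to\Inst[C]$ for its inverse, also an isomorphism. Then $h^{-1}\circ f\circ h : \Inst[C]\to\Inst[C]$ is a homomorphism of the core into itself, hence an isomorphism. Conjugating back, $f = h\circ(h^{-1}\circ f\circ h)\circ h^{-1}$ is a composition of isomorphisms, hence an isomorphism. This shows $\Inst[U]_c$ is itself a core, and in particular that the ``core instance'' is well-defined up to isomorphism independently of the chosen $h$.

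The one subtlety I want to get right — and where I expect the argument needs a little care rather than being purely formal — is the reflection-of-atoms step in part~(1): concluding $p(\vec t)\in\Inst[C]$ from $p(g(h(\vec t)))\in\Inst[C]$. This uses that $g\circ h$ is not merely a bijection on terms but an \emph{isomorphism of structures} (bijective strong homomorphism), so that its inverse is also a homomorphism and pulls the atom $p((g\circ h)(\vec t))$ back to the atom $p(\vec t)$ in $\Inst[C]$. A cleaner way to phrase this, which I would adopt in the write-up, is: by part's injectivity $h$ has a partial inverse on $\Inst[U]_c$, and $g\circ h$ being an automorphism of $\Inst[C]$ means $h$ has a left inverse that is strong; combined with $h$ a homomorphism, $h$ is an embedding. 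Everything else is routine composition of (iso)morphisms, and finiteness of $\Inst[C]$ is used only implicitly through the existence of the core model guaranteed by the standing assumption.
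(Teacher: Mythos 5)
Your proof is correct and follows essentially the same route as the paper's: universality of $\Inst[U]$ yields a homomorphism $g:\Inst[U]\to\Inst[C]$, the composite $g\circ h$ is a self-homomorphism of the core and hence an isomorphism, which gives injectivity and (via the strong, invertible composite) atom reflection for part~(1), and part~(2) is the same conjugation argument through the isomorphism $\Inst[C]\cong\Inst[U]_c$. No gaps.
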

\begin{proof}
  On (1), since \Inst[U] is a universal model and \Inst[C] a (core) model of $\Sigma$ and $\Dnst$, there is a homomorphism $h' : \Inst[U]\to\Inst[C]$.
  Note, $h'\circ h : \Inst[C]\to\Inst[C]$ is an isomorphism as $\Inst[C]$ is a core.
  We show that $h$ is an embedding, \ie $h$ is injective and strong.
    For injectivity, let $u$ and $t$ be distinct terms occurring in $\Inst[C]$.
    If $h(u)=h(t)$, it follows that $h'\circ h(u)=h'\circ h(t)$, which contradicts the assumption that $h'\circ h$ is an isomorphism.
    For showing that $h$ is strong, let $\vec u$ be a list of terms in \Inst[C] and $p(h(\vec u))\in\Inst[U]$.
    We need to show that $p(\vec u)\in\Inst[C]$.
    As $h'\circ h$ is an isomorphism on $\Inst[C]$, its inverse is also an isomorphism.
    Hence, $p(h'\circ h(\vec u))\in\Inst[C]$ and $p((h'\circ h)^{-1}\circ (h'\circ h)(\vec u))\in\Inst[C]$ allow for the conclusion $p(\vec u)\in\Inst[C]$.

    Towards (2), there is an isomorphism $i : \Inst[C]\to\Inst[U]_c$ by the argumentation above.
    Let $h : \Inst[U]_c\to\Inst[U]_c$ be a homomorphism.
    Define $j := i^{-1}\circ (h\circ i)$ (\ie $j : \Inst[C]\to\Inst[C]$).
    Since $\Inst[C]$ is a core, $j$ is an isomorphism.
    Hence, $h$ must be an isomorphism.
\end{proof}

Based on the core instance, we can now show that core entailments of \BNCQs carry over to all universal models.

\begin{theorem}\label{prop_coreEntailmentMinimal}
  For every rule set $\Sigma$, database $\Dnst$, and \BNCQ $q$, if $\Sigma,\Dnst\models_c q$, then $\Inst[U]\models q$ for all universal models $\Inst[U]$ of $\Sigma$ and $\Dnst$.
\end{theorem}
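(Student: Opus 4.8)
The plan is to unfold the definitions and then exploit the fact that a core model sits inside every universal model via homomorphisms that compose to an automorphism. Suppose $\Sigma,\Dnst\models_c q$, so there is a core model $\Inst[C]$ of $\Sigma$ and $\Dnst$ together with a homomorphism $h : q^+\to\Inst[C]$ satisfying $h(q^-)\cap\Inst[C]=\emptyset$. Let $\Inst[U]$ be an arbitrary universal model. Since $\Inst[C]$ is itself a (universal) model of $\Sigma$ and $\Dnst$, there is a homomorphism $g : \Inst[C]\to\Inst[U]$, and since $\Inst[U]$ is universal and $\Inst[C]$ a model, there is a homomorphism $g' : \Inst[U]\to\Inst[C]$. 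By Lemma~\ref{lemma_} (or directly: $g'\circ g : \Inst[C]\to\Inst[C]$ is a homomorphism and $\Inst[C]$ is a core), the map $i := g'\circ g$ restricted to $\Inst[C]$ is an automorphism of $\Inst[C]$, and $i^{-1}$ is one as well. I would take $g\circ h$ as the candidate match witnessing $\Inst[U]\models q$.

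The positive requirement is immediate, as $g\circ h : q^+\to\Inst[U]$ is a composition of homomorphisms. The work is in the negative requirement $(g\circ h)(q^-)\cap\Inst[U]=\emptyset$. Assume for contradiction that $p(g(h(\vec{t})))\in\Inst[U]$ for some negated atom $\nmrnot p(\vec{t})$ of $q$. Applying $g'$ gives $p(g'(g(h(\vec{t}))))\in\Inst[C]$; in particular every term of this atom occurs in $\Inst[C]$, including every constant appearing in $\vec{t}$, since $g$ and $g'$ both fix constants. By safety of \BNCQs every variable of $\vec{t}$ occurs in $q^+$, so $h$ is defined on every term of $\vec{t}$ and sends variables to terms of $\Inst[C]$; combined with the previous observation, $h(\vec{t})$ is a list of terms occurring in $\Inst[C]$, and on such terms $g'\circ g$ coincides with the automorphism $i$. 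Hence $p(i(h(\vec{t})))\in\Inst[C]$, and applying $i^{-1}$ yields $p(h(\vec{t}))\in\Inst[C]$, contradicting $h(q^-)\cap\Inst[C]=\emptyset$. Therefore $g\circ h$ is a genuine match for $q$ in $\Inst[U]$, and $\Inst[U]\models q$.

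I expect the main obstacle to be exactly this last bookkeeping step: one must be careful that the round-trip $g'\circ g$, which a priori is only an automorphism of $\Inst[C]$, may legitimately be applied to and then inverted on the entire argument tuple $h(\vec{t})$ of a negated atom. The delicate case is constants occurring under negation in $q$ that need not a priori appear in $\Inst[C]$; the resolution is that, under the contradiction hypothesis, $g'$ maps the offending image atom back into $\Inst[C]$, which forces all such constants to occur in $\Inst[C]$ after all, so $i$ and $i^{-1}$ act as needed. Two points are worth flagging as essential: safety of the query (every variable under negation also occurs positively), so that $h$ is total on $q^-$, and the standing assumption (together with Lemma~\ref{lemma_}) that guarantees a finite core and hence the automorphism property.
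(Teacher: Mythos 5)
Your proof is correct and follows essentially the same route as the paper, which packages the composition $g'\circ g$ being an automorphism of the core into Lemma~\ref{lemma_} (showing $g$ is an embedding, i.e., a strong injective homomorphism, so that atoms absent from $\Inst[C]$ remain absent from $\Inst[U]$ under $g$) and then invokes that lemma; you simply inline this argument. Your explicit treatment of constants under negation that need not occur in $\Inst[C]$ is a welcome extra care, since the paper's strongness statement is phrased only for terms occurring in $\Inst[C]$.
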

\begin{proof}
  Let $\Inst[C]$ be a core model of $\Sigma$ and $\Dnst$.
  The core instance $\Inst[U]_c\subseteq\Inst[U]$ preserves all FO-queries, including \BNCQs.
  We conclude that for all \BNCQ $q$, $\Inst[C]\models q$ implies $\Inst[U]\models q$. %
\end{proof}
In fact, up to minor variations, the core model is the only model with this property:
the only other models for which \BNCQ satisfaction is preserved in all universal models
are those that can be embedded into the core model.
However, it turns out that \emph{some} \BNCQs allow us to consider broader classes
of models for checking core entailment, as we will see in the following sections.

\section{Affection-Safe \BNCQs}\label{sec_affected}

In spite of its appealing semantics, core entailment has the practical disadvantage that
core models are difficult to compute. This difficulty seems unavoidable if we want minimality in
the sense of Theorem~\ref{prop_coreEntailmentMinimal}. However, this does not mean that
core entailment of \BNCQs does always require us to compute the core model. For example,
\BNCQs without negation can equivalently be answered in any universal model. 
We show that similar results can also be obtained for more interesting classes of \BNCQs, and how these
classes can be effectively recognised.

This requires us to look at more specific classes of models.
Indeed, we can readily see that entailment over \emph{all} universal models does often not coincide with core entailment since core-non-entailments are not preserved across universal models:
every \BNCQ $q$ whose positive part is entailed is also true in some universal model,
unless its negative part is a logical consequence (in this case, the non-entailment of $q$ is a first-order logical consequence).
This can be shown by adding redundant structures where $q$ is satisfied, as in Example~\ref{ex:universal-counterexample},
but taking arbitrary rule sets into account.

\begin{proposition}\label{prop_superSafety}
	Consider a rule set $\Sigma$ and database $\Dnst$.
	For every \BNCQ $q$ such that 
	(a) $\Sigma, \Dnst \models \positive q$ and
	(b) $\Sigma,\Dnst \not\models \positive q \wedge a(\vec t)$ for all $a(\vec t)\in\negative q$,
	there is a universal model $\Inst[U]$ of $\Sigma$ and $\Dnst$ with $\Inst[U]\models q$.
\end{proposition}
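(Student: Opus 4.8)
The plan is to show that, under hypotheses (a) and (b), no auxiliary structure needs to be built at all: I claim that \emph{every} universal model of $\Sigma$ and $\Dnst$ already satisfies $q$, so it suffices to take $\Inst[U]$ to be an arbitrary universal model, one of which always exists (e.g.\ any chase $\Dnst^\infty$, or — given the paper's standing assumption — the core model). The driving observation is that condition~(b), read contrapositively, says that inside \emph{any} model no homomorphic image of $\positive q$ can contain an atom of $\negative q$.

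I would carry this out in three steps. First, fix a universal model $\Inst[U]$ of $\Sigma$ and $\Dnst$. Since universal models satisfy exactly the entailed \BCQs (as recalled at the beginning of Section~\ref{sec_cores}), assumption~(a) gives $\Inst[U]\models\positive q$, so fix a homomorphism $h : \positive q\to\Inst[U]$. Second, I would show that this same $h$ already witnesses $\Inst[U]\models q$, i.e.\ that $h(\negative q)\cap\Inst[U]=\emptyset$. Assume otherwise, so $h(a(\vec t))\in\Inst[U]$ for some $a(\vec t)\in\negative q$. By the safety condition on \BNCQs, every variable of $a(\vec t)$ already occurs in $\positive q$, hence $h$ is defined on $a(\vec t)$ and is in fact a homomorphism from the \BCQ $\positive q\wedge a(\vec t)$ into $\Inst[U]$, so $\Inst[U]\models\positive q\wedge a(\vec t)$. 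Third, I would invoke universality of $\Inst[U]$: for every model $\Inst[M]$ of $\Sigma$ and $\Dnst$ there is a homomorphism $\Inst[U]\to\Inst[M]$, and composing it with the homomorphism from the second step yields $\Inst[M]\models\positive q\wedge a(\vec t)$; hence $\Sigma,\Dnst\models\positive q\wedge a(\vec t)$, contradicting~(b). Therefore $h(\negative q)\cap\Inst[U]=\emptyset$, so $\Inst[U]\models q$, and $\Inst[U]$ is the desired universal model.

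I do not anticipate a genuine obstacle; the two points that need care are that~(b) is a statement about entailment over \emph{all} models, so the argument has to pass through a universal model rather than through $\Dnst$ directly, and that it suffices to treat the atoms of $\negative q$ one at a time — realising any single one of them already contradicts~(b), so no simultaneous-avoidance argument is needed. One could instead give a ``constructive'' proof in the spirit of Example~\ref{ex:universal-counterexample} (freeze $\positive q$ with fresh nulls, add it to $\Dnst$, and chase), but then one would have to re-establish universality for $\Sigma$ and $\Dnst$ rather than merely for the enlarged database, and also check that the added part realises no $a(\vec t)\in\negative q$ — both of which again reduce to~(b), so the direct argument is cleaner. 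Note finally that the argument in fact proves slightly more: since the core model is a universal model, it too satisfies $q$, i.e.\ $\Sigma,\Dnst\models_c q$.
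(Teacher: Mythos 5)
Your proof is correct, and it takes a genuinely different route from the paper's. The paper argues constructively: it freezes $\positive q$ with fresh nulls, adds the frozen atoms to $\Dnst$, chases the result, and then verifies both that the resulting interpretation is still a universal model of $\Sigma$ and $\Dnst$ (by mapping it homomorphically into auxiliary universal models $\Inst[U]_\alpha$ obtained from condition~(b)) and that the frozen match avoids every atom of $\negative q$. You instead observe that no construction is needed: if some match $h$ of $\positive q$ in a universal model $\Inst[U]$ realised an atom $a(\vec t)\in\negative q$, then $h$ would witness $\Inst[U]\models\positive q\wedge a(\vec t)$ as a \BCQ, and since universal models satisfy exactly the entailed Boolean conjunctive queries, this would give $\Sigma,\Dnst\models\positive q\wedge a(\vec t)$, contradicting~(b); hence every match supplied by~(a) already witnesses $q$. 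This is shorter, avoids the chase machinery entirely, and -- as you note -- proves strictly more: under (a) and~(b), \emph{every} universal model satisfies $q$, including the core, so in fact $\Sigma,\Dnst\models_c q$. That stronger conclusion is worth dwelling on, because it shows that hypotheses (a) and~(b) already force all universal models to agree on $q$; the divergence between universal models that the surrounding discussion and Example~\ref{ex:universal-counterexample} are concerned with arises precisely when~(b) \emph{fails} for some $a(\vec t)\in\negative q$ (as it does in that example), in which case a non-core universal model may still satisfy $q$ via a different match. So your argument establishes the proposition as literally stated, while the paper's explicit construction of a ``cluttered'' model is not actually required for it.
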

\begin{proof}
  Define $\Dnst^+ := \Dnst \cup \{ p(\nu(\vec{u})) \mid p(\vec{u})\in\positive q \}$
  with $\nu(t)=t$ for all $t\in\constS$ and $\nu(t)$ a fresh null for all $t\in\varS$.
  Now let $\Inst[U]$ be a chase over $\Sigma$ and $\Dnst^+$ (e.g., some restricted chase).
  By construction, $\Inst[U]$ is a model of $\Sigma$ and $\Dnst$, and $\nu$ is a match
  $\nu:\positive q\to \Inst[U]$. By safety, $\nu$ is defined for $\varsOf{\negative q}$.
  
  Now for any $\alpha=a(\vec t)\in\negative q$, let $\Inst[U]_\alpha$ be a universal model of $\Sigma$ and $\Dnst$ such that
  $\Inst[U]_\alpha\not\models \positive q \wedge \alpha$, which exists due to (b).
  Due to (a), there is a match $\nu':\positive q\to \Inst[U]_\alpha$ such that $\nu'(\alpha)\notin\Inst[U]_\alpha$ (using (b)).
  Since, moreover, $\Dnst\subseteq\Inst[U]_\alpha$, we find that there is a homomorphism $h:\Dnst^+\to \nu'(\positive q)\cup\Dnst$.
  By soundness of the chase, $h$ extends to a homomorphism $h':\Inst[U]\to\Inst[U]_\alpha$.
  Since $\Inst[U]_\alpha$ is universal, this shows that $\Inst[U]$ is universal.
  By construction, $h'(\nu(\alpha))=\nu'(\alpha)\notin\Inst[U]_\alpha$, and hence $\nu(\alpha)\notin\Inst[U]$.
  Applying the same reasoning to all $\alpha\in\negative q$, we find that $\nu$ shows
  $\Inst[U]\models q$.
\end{proof}

Hence, only limited classes of \BNCQs can be evaluated over arbitrary universal models.
This, however, is due to some universal models containing unmotivated clutter
that users might intuitively not expect. In particular, the universal models that 
are computed by chase procedures are not of this form, and can
only produce nulls in jointly affected positions (Proposition~\ref{prop:chase-ja-compatibility}).
Therefore, if all variables in $\negative q$ are bound to some position in $\positive q$ that is not jointly affected,
then these variables can only match constants.

\begin{definition}\label{def_affectionSafety}
  Let $\Sigma$ be a rule set.
  A \BNCQ $q$ is \emph{affection-safe \wrt $\Sigma$} if every variable in $\negative q$ occurs at a position in $\positive q$ that is not jointly affected.
\end{definition}

Since all common (e.g., restricted, Skolem, oblivious) chases are compatible with joint affection,
we can use their outputs to answer affection-safe \BNCQs.

\begin{theorem}\label{thm:affection-safe}
  For every rule set $\Sigma$, database $\Dnst$, and affection-safe \BNCQ $q$, we have $\Sigma,\Dnst\models_c q$ iff $\Inst[M]\models q$ for some\footnote{The restricted chase, e.g., can yield distinct chase results depending on rule application order. Any of them can be used here.}
  chase $\Inst[M]$ computed for $\Sigma$ and $\Dnst$ by a chase procedure compatible with jointly affected positions (Proposition~\ref{prop:chase-ja-compatibility}).
\end{theorem}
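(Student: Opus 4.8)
The plan is to prove both directions of the biconditional, using the core instance machinery from Lemma~\ref{lemma_} and Theorem~\ref{prop_coreEntailmentMinimal} for the forward direction, and a careful analysis of how a homomorphism into the core interacts with the restricted class of chase models for the backward direction. Let $\Inst[C]$ be the core model of $\Sigma$ and $\Dnst$, and let $\Inst[M]$ be any chase computed by a procedure compatible with jointly affected positions.

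For the ``only if'' direction ($\Sigma,\Dnst\models_c q \Rightarrow \Inst[M]\models q$): since $\Inst[M]$ is a universal model, this is immediate from Theorem~\ref{prop_coreEntailmentMinimal}. So the real content is the ``if'' direction.

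For the ``if'' direction, suppose $\Inst[M]\models q$, witnessed by a homomorphism $h:\positive q\to\Inst[M]$ with $h(\negative q)\cap\Inst[M]=\emptyset$. We want to produce a homomorphism $g:\positive q\to\Inst[C]$ with $g(\negative q)\cap\Inst[C]=\emptyset$, which gives $\Inst[C]\models q$ and hence $\Sigma,\Dnst\models_c q$. Since $\Inst[C]$ is a universal model, there is a homomorphism $e:\Inst[M]\to\Inst[C]$; set $g := e\circ h$, so $g:\positive q\to\Inst[C]$. The positive part is handled; the crux is to show $g(\negative q)\cap\Inst[C]=\emptyset$, i.e.\ that no negated atom $a(\vec t)\in\negative q$ is mapped into $\Inst[C]$ under $g$. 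The key idea: by affection-safety, every variable $x$ occurring in $\negative q$ occurs in $\positive q$ at some position $\tuple{p,i}$ that is \emph{not} jointly affected. By Proposition~\ref{prop:chase-ja-compatibility}, the term $h(x)$ sitting at that position in $\Inst[M]$ cannot be a null, so $h(x)\in\constS$; hence $g(x)=e(h(x))=h(x)$ is that same constant. Now consider $a(\vec t)\in\negative q$: all variables in $\vec t$ are mapped by $g$ to the very same constants as by $h$ (and constants in $\vec t$ are fixed), so $g(a(\vec t)) = h(a(\vec t))$ as ground atoms. Suppose for contradiction $g(a(\vec t))\in\Inst[C]$. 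Applying the embedding $h':\Inst[C]\to\Inst[M]$ (the homomorphism into the core's target model $\Inst[M]$; or more carefully, compose $e$ with a homomorphism back from $\Inst[C]$ into $\Inst[M]$ — here one should be a bit careful and instead argue directly) we would transport this ground atom, whose arguments are all constants and hence fixed by any homomorphism, into $\Inst[M]$, contradicting $h(a(\vec t))\notin\Inst[M]$. Since $g(a(\vec t))$ is ground, it is preserved verbatim by any homomorphism $\Inst[C]\to\Inst[M]$, giving $g(a(\vec t))\in\Inst[M]$, i.e.\ $h(a(\vec t))\in\Inst[M]$ — contradiction. Thus $g(a(\vec t))\notin\Inst[C]$, and ranging over all $a(\vec t)\in\negative q$ gives $\Inst[C]\models q$.

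The main obstacle is the bookkeeping in the last step: one must be careful that affection-safety really forces \emph{all} of $\negative q$ to become ground under $g$ (this uses that every variable of $\negative q$ — not merely some — sits at a non-jointly-affected position in $\positive q$, which is exactly Definition~\ref{def_affectionSafety}), and that the existence of \emph{some} homomorphism $\Inst[C]\to\Inst[M]$ is available (it is, since $\Inst[C]$ is a universal model and $\Inst[M]$ is a model of $\Sigma$ and $\Dnst$), so that ground atoms true in $\Inst[C]$ are transported into $\Inst[M]$. One should also note that both $\Inst[C]$ and $\Inst[M]$ contain $\Dnst$, so the constants in play genuinely behave uniformly. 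Everything else is a routine composition of homomorphisms together with the observation that homomorphisms are the identity on constants and hence preserve ground atoms in both directions between any two models.
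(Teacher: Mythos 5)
Your proof is correct and takes essentially the same route as the paper's: both use Proposition~\ref{prop:chase-ja-compatibility} together with affection-safety to force the variables of $\negative{q}$ onto constants, and then transport the resulting ground negative atoms between the chase and the core via the two homomorphisms $\Inst[M]\to\Inst[C]$ and $\Inst[C]\to\Inst[M]$. One small misattribution: the homomorphism $e:\Inst[M]\to\Inst[C]$ exists because $\Inst[M]$, being a chase, is itself universal and $\Inst[C]$ is a model (not because $\Inst[C]$ is universal -- that is what gives you the reverse map), but the needed fact holds and the argument goes through.
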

\begin{proof}
  The ``only if'' direction follows from Theorem~\ref{prop_coreEntailmentMinimal}.
  For the ``if'' direction, let $\Inst[M]$ be as in the claim.
  By Proposition~\ref{prop:chase-ja-compatibility}, terms $t$ at positions $\tuple{a,i}$ in $\Inst[M]$ that are not jointly affected are constants. %
  Let $\Inst[C]$ be the core model of $\Sigma$ and $\Dnst$. By universality, there are homomorphisms $h_1 : \Inst[M]\to\Inst[C]$ and $h_2 : \Inst[C]\to\Inst[M]$.
  Suppose $\Inst[M]\models q$ by a match $h : \positive q \to \Inst[M]$ with $h(\negative q)\cap\Inst[M] = \emptyset$, but
  $\Sigma,\Dnst\not\models_c q$. 
  Since $h_1\circ h(\positive q) \subseteq\Inst[C]$, this means $h_1\circ h(\negative q) \neq \emptyset$.
  Hence there is $p(\vec t)\in \negative q$ with $p(h_1\circ h(\vec t))\in\Inst[C]\setminus\Inst[M]$,
  and therefore $p(h_2\circ h_1\circ h(\vec t))\in\Inst[M]$.
  As $h(\vec t)$ contains only constants, this implies $p(h(\vec t))\in\Inst[M]$, contradicting our assumption that $\Inst[M]\models q$.
\end{proof}

\section{Core-Safe \BNCQs}\label{sec_collapsing}

Affection safety ensures that chase-based models only entail negative \BNCQ atoms if they match null-free facts,
which always agree across all universal models.
We now relax this requirement based on the recent notion of \emph{restraints} \cite{KR20:cores},
which allow us to identify a larger set of ``safe'' positions to bind to.
The next example, adopted from \citeauthor{AlvianoMP17:circum-tgds}~\shortcite{AlvianoMP17:circum-tgds}, illustrates the problem:
\begin{example}\label{ex:father-core-vs-universal}
  Take as database $\Dnst = \{ p(\text{A}), f(\text{B},\text{A}) \}$
  (mnemonics: \textbf{p}arent, \textbf{f}ather-of, \textbf{e}qual)
  and rules 
  \begin{align}
      f(x,y) & \rightarrow  e(x,x) \label{eq:equali}\\
      p(x) & \rightarrow  \exists y .\ f(y,x)\wedge e(y,y) \label{eq:redu}
   \end{align}
   We can obtain two restricted chases on $\Sigma=\{ \eqref{eq:equali}, \eqref{eq:redu} \}$ and \Dnst: $\Inst[U]_1 = \Dnst \cup \{ e(\text{B},\text{B}) \}$ and $\Inst[U]_2 = \Dnst \cup \{ f(n, \text{A}), e(n,n), e(\text{B},\text{B}) \}$.
  For $\Inst[U]_1$, we only apply \eqref{eq:equali} for match $h_1=\{x\mapsto\text{B}, y\mapsto\text{A}\}$;
  then \eqref{eq:redu} is satisfied.
  For $\Inst[U]_2$, we apply \eqref{eq:redu} for match $h_2=\{x\mapsto\text{A}\}$, and then
  rule \eqref{eq:equali} for match $h_1$.
  The \BNCQ $q = \exists x_1, x_2, y .\ f(x_1, y) \wedge f(x_2, y) \wedge \neg e(x_1,x_2)$ is such that $\Inst[U]_1\not\models q$ and $\Inst[U]_2\models q$.
  Since $\Inst[U]_1$ is the core model of $\Sigma$ and \Dnst, we obtain $\Sigma,\Dnst \not\models_c q$.
\end{example}
Arguably, the construction of $\Inst[U]_2$ should not have applied \eqref{eq:redu} with the extended
match $h_2^\star=\{x\mapsto\text{A}, y\mapsto n\}$, because $\Inst[U]_2$ eventually does not
need $n$ to satisfy \eqref{eq:redu}.
Indeed, $h_2^+=\{x\mapsto\text{A}, y\mapsto \text{B}\}$ would be another way to extend
$h_2$ to satisfy \eqref{eq:redu}.
Functions that remap heads of prior rule applications to alternative structures
in a chase are called \emph{alternative matches}, and they can be shown to occur whenever
a restricted chase fails to produce a core \cite{KR20:cores}.
\begin{definition}\label{def:alternative_matches}
  Let $\Inst_a\subseteq\Inst_b$ be interpretations such that $\Inst_a$ is obtained from applying rule $\rul$ for extended match $h^\star$.
  A homomorphism $h' : h^\star(\head{\rul})\to\Inst_b$ is an \emph{alternative match of $h$} if
  \begin{itemize}
    \item $h'(t)=t$ for all terms $t$ in $h^\star(\body{\rul})$, and
    \item there is a null $n$ in $h^\star(\head{\rul})$ that does not occur in $h'(h^\star(\head{\rul}))$.
  \end{itemize}
\end{definition}

The occurrence of alternative matches in a restricted chase is associated with structures that
do not occur in the core~\cite{KR20:cores}. Such structures could lead to additional \BNCQ matches.
Since alternative matches involve nulls, affection safety can mitigate this by forcing variables in
negative atoms to match constants.
However, it is more general and still safe if we merely restrict to elements that are
not directly or indirectly related to potential alternative matches.
The existence of alternative matches in a real chase is undecidable, but we can (over)approximate
such matches by considering chase-like interactions of pairs of rules. 
Based on this approach of defining \emph{restraints} between rules~\cite{KR20:cores}, we
can identify existential variables that are at risk of producing redundant structures:

\begin{definition}\label{def:restrained_variables}
  A rule $\rul[1]$ \emph{restrains} rule $\rul[2]$, written $\rul[1]\restrain{}\rul[2]$, if there are interpretations 
  $\Inst[I]_a\subseteq\Inst[I]_b$ and a function $h_2$ where
  \begin{enumerate}
  \item $\Inst[I]_b$ is obtained by applying $\rul[1]$ for match $h_1$,
  \item $\Inst[I]_a$ is obtained by applying $\rul[2]$ for match $h_2$,
  \item $h_2$ has an alternative match $h'{:}\; h_2^\star(\head{\rul[2]})\,{\to}\,\Inst_b$, and
  \item $h_2$ has no alternative match $h_2^\star(\head{\rul[2]})\to\Inst_b\setminus h_1^\star(\head{\rul[1]})$.
  \end{enumerate}
  In this situation, a variable $x\in\exVarsOf{\rul[2]}$ is \emph{restrained} if $h_2^\star(x)$ does not occur in
  $h'( h_2^\star(\head{\rul[2]}))$.
  We write $R_\Sigma$ for the set of all restrained variables of a rule set $\Sigma$.
\end{definition}

Variables in $R_\Sigma$ may produce nulls that are not represented in the core model.
Moreover, such nulls may be involved in further rule applications that derive additional
structures that deviate from the core. To find positions of elements that are certain to agree with 
the core, we therefore consider all positions that are influenced by
restrained variables in the sense of Definition~\ref{def:Vinfluenced}.

\begin{definition}\label{def:core-safety}
  Let $\Sigma$ be a rule set.
  A position $\pi$ is \emph{core-safe (\wrt $\Sigma$)} if it is not $R_\Sigma$-influenced.
  A \BNCQ $q$ is \emph{core-safe \wrt $\Sigma$} if every variable in $\negative q$ occurs at a core-safe position in $\positive q$.
\end{definition}

Reconsidering query $q$ of Example~\ref{ex:father-core-vs-universal}, we find that $q$ is not core-safe \wrt the given rule set.
Positions $\tuple{e,1},\tuple{e,2},\tuple{f,1}$ are not core-safe and variables $x_1$ and $x_2$ both occur at $\tuple{f,1}$ in $\positive q$.
This explains why this example admits restricted chase sequences on which the entailment of $q$ disagrees with the core model.
Indeed, for core-safe \BNCQs, this problem can never occur:

\begin{theorem}\label{thm:core-safety}
  For every rule set $\Sigma$, database $\Dnst$, core-safe \BNCQ $q$, and
  restricted chase $\Inst[M]$ of $\Sigma$ and $\Dnst$, we find that 
  $\Sigma,\Dnst\models_c q$ iff $\Inst[M]\models q$.
\end{theorem}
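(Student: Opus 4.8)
The plan is to prove the two directions separately, mirroring the structure of the proof of Theorem~\ref{thm:affection-safe}. For the ``only if'' direction, observe that a restricted chase $\Inst[M]$ of $\Sigma$ and $\Dnst$ is a universal model (by fairness); hence if $\Sigma,\Dnst\models_c q$, then $\Inst[M]\models q$ by Theorem~\ref{prop_coreEntailmentMinimal}.

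For ``if'', suppose $\Inst[M]\models q$, witnessed by a match $h:\positive q\to\Inst[M]$ with $h(\negative q)\cap\Inst[M]=\emptyset$, and let $\Inst[C]$ be the core model of $\Sigma$ and $\Dnst$. Since $\Inst[M]$ and $\Inst[C]$ are both universal, there are homomorphisms $h_1:\Inst[M]\to\Inst[C]$ and $h_2:\Inst[C]\to\Inst[M]$; by Lemma~\ref{lemma_}, $\Inst[M]_c:=h_2(\Inst[C])$ is the core instance of $\Inst[M]$, $h_2$ is an embedding, and every endomorphism of $\Inst[M]_c$ is an isomorphism. Since $\Inst[M]_c$ is a core, $(h_2\circ h_1)$ restricted to $\Inst[M]_c$ is an automorphism; post-composing $h_2\circ h_1$ with its inverse yields a \emph{retraction} $g:\Inst[M]\to\Inst[M]_c$, i.e. a homomorphism with $g(t)=t$ for every term occurring in $\Inst[M]_c$.

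The crux is the claim that $g$ fixes every term occurring at a core-safe position of $\Inst[M]$ — equivalently, that any such term already occurs in $\Inst[M]_c$. I would prove this by induction along the restricted chase sequence underlying $\Inst[M]$, simultaneously tracking, for each null, the existential variable that introduced it and (as in the proof of Proposition~\ref{prop:chase-ja-compatibility}) the positions it can reach. The semantic input is from \cite{KR20:cores}: any structure of a restricted chase that is absent from the core is witnessed by an alternative match (Definition~\ref{def:alternative_matches}), and the over-approximation $R_\Sigma$ captures all existential variables whose nulls are at risk of being involved in such a match (Definition~\ref{def:restrained_variables}). The inductive step must show that a null introduced for some $z\notin R_\Sigma$ can be moved by $g$ only if some null occurring in the body match that created it is itself moved by $g$; feeding this back into the $\leadsto^{*}$- and $\Omega$-bookkeeping of Definitions~\ref{def_jagraph}--\ref{def:Vinfluenced} then yields that every null moved by $g$ occurs exclusively at $R_\Sigma$-influenced positions. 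This induction — in particular, relating ``moved by $g$'' precisely to the transitive closure $\leadsto^{*}$ that underlies $R_\Sigma$-influence — is the step I expect to be the main obstacle.

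Granting the claim, I would conclude as follows. Consider the match $h_1\circ h:\positive q\to\Inst[C]$. If some $p(\vec t)\in\negative q$ had $p(h_1(h(\vec t)))\in\Inst[C]$, then applying $h_2$ followed by the retraction adjustment gives $p(g(h(\vec t)))\in\Inst[M]_c\subseteq\Inst[M]$. Each entry of $\vec t$ is either a constant, which $g$ fixes, or a variable of $\negative q$, which by safety and core-safety of $q$ occurs at a core-safe position of $\positive q$; hence its $h$-image occurs at a core-safe position of $\Inst[M]$ and is fixed by $g$ by the claim. Thus $g(h(\vec t))=h(\vec t)$, so $p(h(\vec t))\in\Inst[M]$, contradicting $h(\negative q)\cap\Inst[M]=\emptyset$. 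Therefore $(h_1\circ h)(\negative q)\cap\Inst[C]=\emptyset$, so $h_1\circ h$ witnesses $\Inst[C]\models q$, i.e. $\Sigma,\Dnst\models_c q$.
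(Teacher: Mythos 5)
Your proposal takes essentially the same route as the paper: the ``only if'' direction is Theorem~\ref{prop_coreEntailmentMinimal}, and the ``if'' direction hinges on exactly the claim you isolate --- that every term at a core-safe position of $\Inst[M]$ already occurs in the core instance $\Inst[M]_c$ (the paper's Lemma~\ref{lemma:coreSafesAreCore}) --- after which your concluding contradiction via $h_2\circ h_1$ being an isomorphism on $\Inst[M]_c$ is the paper's argument verbatim. The induction you flag as the main obstacle is resolved in the paper just as you anticipate: if the retraction moves a null created for an existential variable $z$ while fixing all terms of the body match that introduced it, one reads off an alternative match, so $z\in R_\Sigma$ and the head position is $R_\Sigma$-influenced, contradicting core-safety; the remaining frontier-variable case is handled by an auxiliary observation (Lemma~\ref{lemma:core-safety-in-rules}) that frontier variables feeding a core-safe head position must themselves occur at core-safe body positions, so the induction hypothesis applies to them.
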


The ``only if'' direction is again clear from Theorem~\ref{prop_coreEntailmentMinimal}.
The proof for the ``if'' direction is given below.
The key insight is that terms at core-safe positions in a chase $\Inst[M]$ must belong
to the core instance $\Inst[M]_c$ of $\Inst[M]$ (\cf Lemma~\ref{lemma_}).
This is a similar situation as for affection-safety,
where we showed that variables at affection-safe positions much be instantiated with constants,
which therefore occur in the core.

\begin{lemma}\label{lemma:coreSafesAreCore}
  For a restricted chase $\Inst[M]$ of rule set $\Sigma$ and database \Dnst, and core instance $\Inst[M]_c$ of $\Inst[M]$,
  if a term $t$ occurs at a core-safe position in $\Inst[M]$, then $t$ occurs in $\Inst[M]_c$.
\end{lemma}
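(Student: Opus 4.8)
The plan is to prove the contrapositive: if a term $t$ in a restricted chase $\Inst[M]$ does \emph{not} occur in the core instance $\Inst[M]_c$, then $t$ occurs only at positions that are $R_\Sigma$-influenced (hence not core-safe). So suppose $t$ appears in $\Inst[M]$ but $t \notin \Inst[M]_c$. Since $\Inst[M]_c = h(\core{\Inst[M]})$ for the retraction $h : \Inst[M] \to \Inst[M]_c \subseteq \Inst[M]$ (from Lemma~\ref{lemma_}), $t$ being absent from $\Inst[M]_c$ means $h(t) \neq t$, i.e.\ the core retraction genuinely collapses $t$. The first step is to connect such a collapse to the \emph{restrained variables} machinery: by the results of \citeauthor{KR20:cores}, whenever a restricted chase produces a non-core, the redundancy is witnessed by an alternative match, and the nulls eliminated under the core retraction are (traced back to) existential variables that are restrained in the sense of Definition~\ref{def:restrained_variables}. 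Concretely, I would argue that $t$ must be a null, and that if we follow the chase derivation backward to the step $\Dnst^{k_0}$ at which $t$ was freshly introduced by some rule $\rul[0]$ via existential variable $z$, then $z \in R_\Sigma$.

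The second, and I expect main, step is to show that \emph{every} position at which $t$ can occur anywhere in $\Inst[M]$ is $R_\Sigma$-influenced. This is a propagation argument completely parallel to the proof of Proposition~\ref{prop:chase-ja-compatibility}: once $t$ is born at a position $\tuple{q,j} \in \Pi^H_z \subseteq \Omega_z$ (for $z$ restrained), any later rule application that carries $t$ to a new position does so through a frontier variable $y$ whose body positions $\Pi^B_y$ already contain only positions reachable from $z$ (inductively, positions in $\Omega_{z'}$ for some $z'$ with $z \leadsto^* z'$), and then by Definition~\ref{def_jagraph} we get $z' \leadsto z''$ for the existential variable $z''$ of that rule if it creates one, keeping us inside the $R_\Sigma$-influenced positions; and non-existential propagation keeps $t$ inside $\Omega_{z'}$ by clause~(2) of the definition of $\Omega$. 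Thus by induction on the chase step index, every occurrence of $t$ is at an $R_\Sigma$-influenced position. Contraposing, a term at a core-safe position cannot be collapsed by the core retraction, so it lies in $\Inst[M]_c$.

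The delicate point I want to flag is the bridge in step one: I am relying on the fact (from \citeauthor{KR20:cores}) that the \emph{semantic} event ``$t$ is dropped by the core retraction of a restricted chase'' implies the \emph{syntactic} membership $z \in R_\Sigma$ for the existential variable $z$ that created $t$. The restrained relation $\restrain$ is defined via chase-like interactions of pairs of rules (an over-approximation), so one must check that an actual alternative-match collapse in the full chase indeed projects down to such a two-rule interaction witnessing $\rul[1] \restrain \rul[0]$ for the rule $\rul[1]$ producing the structure that makes $h^\star_0(z)$ redundant, and moreover that $h^\star_0(z)$ itself is the restrained variable's image — matching the last clause of Definition~\ref{def:restrained_variables}. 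If the cited lemma gives this directly, the argument is clean; if not, I would need to unfold the definition of the core chase / alternative matches to localize the collapse witness, which is the real work. Everything after that bridge is the mechanical $\leadsto^*$ / $\Omega$ bookkeeping, essentially a reprise of Proposition~\ref{prop:chase-ja-compatibility}'s proof with ``$\exVarsOf{\Sigma}$'' replaced by ``$R_\Sigma$''.
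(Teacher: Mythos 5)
There is a genuine gap, and it sits exactly where you flagged it --- but the problem is worse than a missing citation: the bridge you want in step one is not true in the form you state it. You claim that if a null $t$ is dropped by the core retraction, then the existential variable $z$ that created $t$ lies in $R_\Sigma$. To witness $z\in R_\Sigma$ you ultimately need an \emph{alternative match} for the rule application that created $t$, and Definition~\ref{def:alternative_matches} requires that alternative match to fix every term of $h^\star(\body{\rul[0]})$ pointwise. If the body match $h_0$ of that rule application itself used nulls that the core retraction collapses, the retraction restricted to $h_0^\star(\head{\rul[0]})$ fixes none of them, no alternative match is obtained, and $z$ need not be restrained at all: the collapse of $t$ is then \emph{inherited} from an earlier redundancy rather than witnessed locally, and it is only the $\leadsto^*$-influence of some earlier restrained variable that covers $t$'s positions. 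So ``collapsed null $\Rightarrow$ creating variable is restrained'' fails in general, and no lemma of \citeauthor{KR20:cores} will hand it to you.

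The paper's proof avoids this by not separating your two steps. It fixes a specific retraction $h' = h_2\circ h_1$ of $\Inst[M]$ onto $\Inst[M]_c$ (built from Lemma~\ref{lemma_} so that $h'$ is the identity on $\Inst[M]_c$, hence $h'(u)\neq u$ implies $u\notin\range{h'}$) and proves by induction along the chase that \emph{terms at core-safe positions are fixed by $h'$}. In the existential case of the induction step, core-safety of the head position forces, via Lemma~\ref{lemma:core-safety-in-rules}~(2), all frontier variables of the rule to occur at core-safe body positions; the induction hypothesis then guarantees the body image is fixed by $h'$, and only at that point can $h'$ be massaged into a genuine alternative match $\hat{h'}$ (setting it to the identity on non-frontier body images), yielding $t_i\in R_\Sigma$ and the contradiction with core-safety. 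In other words, the hypothesis you are trying to discharge in step one is only available \emph{because} the position is core-safe and the induction has already secured the body --- it is not a standalone property of collapsed nulls. Your step two (the $\leadsto^*$/$\Omega$ bookkeeping mirroring Proposition~\ref{prop:chase-ja-compatibility}) is fine and corresponds to Lemma~\ref{lemma:core-safety-in-rules}~(1) and the frontier case of the induction, but without the corrected step one the argument does not go through.
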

\begin{aaai}
  \begin{proof}[Proof Sketch]
    Let $\Inst[C]$ be a core model of $\Sigma$ and $\Dnst$, which exists since $\Sigma$ and $\Dnst$ have a finite universal model.
    Let $h_1 : \Inst[M]\to\Inst[C]$ be a homomorphism (existence by $\Inst[M]$ being universal) and $h_2 : \Inst[C]\to\Inst[M]_c$ be the isomorphism for which $h_2^{-1}$ agrees with $h_1$, \ie $h_2^{-1}(t)=u$ if $h_1(t)=u$.
    The existence of an isomorphism $i$ is ensured by Lemma~\ref{lemma_}.
    Furthermore, consider the restriction $h'_1 : \Inst[M]_c\to\Inst[C]$ of $h_1$ to $\Inst[M]_c$.
    By Lemma~\ref{lemma_}~(2), $i\circ h'_1 : \Inst[M]_c\to\Inst[M]_c$ is an isomorphism.
    Set $h_2 := (h'_1)^{-1}$, which is an isomorphism since $(i^{-1}\circ i\circ h'_1)=h'_1$ is an isomorphism.

    We subsequently analyse the homomorphism $h':= h_2\circ h_1$.
    With respect to the choice of $h_2$, we show that if $t$ occurs at a core-safe position in $\Inst[M]$, then $h'(t)=t$.
    The rest of the proof is an induction on the chase sequence of $\Inst[M]$ and is included in the full proof included in the technical report~\cite{arXivReport2021}.\qedhere
  \end{proof}
\end{aaai}
\begin{arxiv}
  \begin{proof}
  Let $\Inst[C]$ be a core model of $\Sigma$ and $\Dnst$.
  Let $h_1 : \Inst[M]\to\Inst[C]$ be a homomorphism (existence by $\Inst[M]$ being universal) and $h_2 : \Inst[C]\to\Inst[M]_c$ be the isomorphism for which $h_2^{-1}$ agrees with $h_1$, \ie $h_2^{-1}(t)=u$ if $h_1(t)=u$.
  The existence of an isomorphism $i$ is ensured by Lemma~\ref{lemma_}.
  Furthermore, consider the restriction $h'_1 : \Inst[M]_c\to\Inst[C]$ of $h_1$ to $\Inst[M]_c$.
  By Lemma~\ref{lemma_}~(2), $i\circ h'_1 : \Inst[M]_c\to\Inst[M]_c$ is an isomorphism.
  Set $h_2 := (h'_1)^{-1}$, which is an isomorphism since $(i^{-1}\circ i\circ h'_1)=h'_1$ is an isomorphism.

  We subsequently analyse the homomorphism $h':= h_2\circ h_1$.
  With respect to the choice of $h_2$, we show that if $t$ occurs at a core-safe position in $\Inst[M]$, then $h'(t)=t$.
  By induction on the chase sequence of $\Inst[M]$ (\ie $\Inst[M]^0, \Inst[M]^1, \ldots$).
  \begin{description}
    \item[Base:] Since $\Inst[M]^0=\Dnst\subseteq\Inst[M]_c$, every term $t$ in $\Inst[M]^0$ is a constant and $h'(t)=t$ because $h'$ is homomorphism.
    \item[Hypothesis:] For $k\in\mathbb{N}$, $h'(t)=t$ for every term $t$ in $\Inst[M]^k$ that occurs at a core-safe position.
    \item[Step:] We consider the step from $\Inst[M]^k$ to $\Inst[M]^{k+1}$ by rule $\rul\in\Sigma$ and (unsatisfied) match $h$ with $\Inst[M]^{k+1}=\Inst[M]^k \cup h^\star(\head{\rul})$.
    Let $a(\vec t)\in\head{\rul}$ and $\tuple{a,i}$ be core-safe ($i\in\{ 1, \ldots, |\vec t|\}$).
    We distinguish two cases for $h^\star(t_i)=u$:
    If $u\in\constS$, $h'(u)=u$ because $h'$ is homomorphism.
    If $u\in\nullS$, $t_i\in\varS$ and we distinguish two more cases, namely (a) $t_i$ is a universally quantified (\ie frontier) variable and (b) $t_i$ is an existentially quantified variable of $\rul$.

    In case (a), $t_i$ occurs at a core-safe position in $\body{\rul}$ by Lemma~\ref{lemma:core-safety-in-rules} (1).
    Thus, the induction hypothesis guarantees $h'(u)=u$.

    For case (b), we show that $u\neq h'(u)$ implies an alternative match $\hat{h'}$ of $h$.
    If $h(x)=t$ for a non-frontier body variable $x$ of $\rul$, $\hat{h'}(t)=t$.
    Otherwise, $\hat{h'}(t)=h'(t)$.
    First, we show that $\hat{h'}(t)=t$ for all terms $t$ occurring in $h(\body{\rul})$:
    If $t\in\constS$, the claim follows from $h'$ being a homomorphism.
    If $t\in\nullS$, there must have been (i) a non-frontier variable $x$ with $h(x)=t$ or (ii) a frontier variable $y$ in $\rul$ with $h(y)=t$.
    In case (i), $\hat{h'}(t)=t$ by definition.
    In case (ii), $y$ occurs at a core-safe position in $\body{\rul}$ by Lemma~\ref{lemma:core-safety-in-rules} (2).
    Thus, the induction hypothesis ensures that $\hat{h'}(t)=h'(t)=t$.

    By the choice of $h_2$, if $h'(u)\neq u$, then $u\notin\range{h'}$.
    Thus, $h'(u)$ is an alternative to $u$, that does not occur in $h'(h^\star(\head{\rul}))$.
    $\hat{h'}$ is an alternative match for $h$ in $\Inst[M]$.
    But then $t_i\in R_\Sigma$ and $\tuple{a,i}\in\Omega_{t_i}$, contradicting our assumption that $\tuple{a,i}$ is core-safe.
  \end{description}
  Hence, every term in a core-safe position in $\Inst[M]$ also occurs in $\Inst[M]_c$.
\end{proof}
\end{arxiv}

\begin{proof}[Proof of Theorem~\ref{thm:core-safety}]
  For the remaining ``if'' direction, suppose $\Inst[M]\models q$.
  Then there is a homomorphism $h : \positive q \to \Inst[M]$ with $h(\negative q)\cap\Inst[M]=\emptyset$.
  For core model $\Inst[C]$ of $\Sigma$ and $\Dnst$ and core instance $\Inst[M]_c$ of $\Inst[M]$, let $h_1 : \Inst[M]\to\Inst[C]$ / $h_2 : \Inst[C]\to \Inst[M]_c$ be the respective homomorphism/isomorphism (by Lemma~\ref{lemma_}).

  Let $r(\tlist t)\in\negative q$.
  We show that $r(h(\tlist t))\notin\Inst[M]$ implies $r(h_1(h(\tlist t)))\notin\Inst[C]$.
  Every variable in $\tlist t$ occurs in at least one core-safe position in $\positive q$ (by core-safety of $q$).
  Thus, every term in $\tlist u = h(\tlist t)$ occurs in $\Inst[M]_c$ by Lemma~\ref{lemma:coreSafesAreCore}.
  Since $r(\tlist u)\notin\Inst[M]$ (and $r(\tlist u)\notin\Inst[M]_c$), $r(h_1(\tlist u))\notin\Inst[C]$ because, otherwise, $r(h_2(h_1(\tlist u)))\in\Inst[M]_c$, which contradicts $r(\tlist u)\notin\Inst[M]_c$ (therefore, $\notin\Inst[M]$) because $h_2\circ h_1$ is an isomorphism on $\Inst[M]_c$ (by Lemma~\ref{lemma_}~(2)).
\end{proof}

Core-safe \BNCQs therefore are a significant generalisation of affection-safe \BNCQs, at the cost of
requiring the use of the restricted chase -- or any other correct chase procedure that ensures that its
results are subsets of some restricted chase. In practice, this includes chase implementations that
use specific strategies to decide the order in which rules are applied, e.g., by prioritising
Datalog rules (which never introduce unnecessary nulls) \cite{UKJDC18:VLogSystemDescription}.
Since Definition~\ref{def:restrained_variables} identifies restrained variables under the assumption
that rule $r_1$ might be applied before rule $r_2$, the fact that some rule application orders are
generally impossible (or simply did not happen in a specific run) allows us to consider
more variables \emph{effectively core-safe}.

\begin{definition}
  Let $\Sigma$ be a rule set, $\Dnst$ a database, and $S = \Dnst^0, \Dnst^1, \Dnst^2, \ldots$ a restricted chase sequence with $\Dnst^{i-1} \xrightarrow{\rul[i], h_{i}} \Dnst^{i}$ for $i\geq 1$, $\rul[i]\in\Sigma$ and (unsatisfied) match $h_i$.
  A variable $x\in R_\Sigma$ is \emph{effectively restrained in $S$} if there are $j<k$, such that $x$ occurs in rule $\rul[j]$ and $\rul[k] \restrain \rul[j]$.
  The set of all effectively restrained variables in $S$ is denoted $R_S$.

  A position $\tuple{a, i}$ is \emph{effectively core-safe \wrt $S$} if it is not $R_S$-influenced.
  A BNCQ $q$ is \emph{effectively core-safe \wrt $S$} if every variable $x$ in $\negative{q}$ occurs at an effectively core-safe position in $\positive{q}$.
\end{definition}
Effective core safety marks even more positions as safe for \BNCQs to query for.
\begin{theorem}\label{thm:effectively-core-safe}
  For every rule set $\Sigma$, database $\Dnst$, restricted chase sequence $S=\Dnst^0, \Dnst^1, \Dnst^2, \ldots$ over $\Sigma$ and $\Dnst$
  with chase result $\Inst[M]=\bigcup_{i\geq 0}\Dnst^i$,
  and \BNCQ $q$ that is effectively core-safe for $S$,
  it holds that
  $\Sigma,\Dnst\models_c q$ iff $\Inst[M]\models q$.
\end{theorem}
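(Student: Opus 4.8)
The plan is to reuse the architecture of Theorem~\ref{thm:core-safety} almost verbatim, changing only its central lemma. The ``only if'' direction costs nothing: the chase result $\Inst[M]=\bigcup_{i\geq 0}\Dnst^i$ of $S$ is a universal model of $\Sigma$ and $\Dnst$, so $\Sigma,\Dnst\models_c q$ implies $\Inst[M]\models q$ by Theorem~\ref{prop_coreEntailmentMinimal}.

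For ``if'', I would first prove the ``effective'' counterpart of Lemma~\ref{lemma:coreSafesAreCore}: for the concrete chase sequence $S$ with result $\Inst[M]$ and core instance $\Inst[M]_c$ of $\Inst[M]$, every term occurring at an \emph{effectively} core-safe position (\wrt $S$) in $\Inst[M]$ occurs in $\Inst[M]_c$. Granting this, the remainder is identical to the proof of Theorem~\ref{thm:core-safety}: if $\Inst[M]\models q$ via a match $h:\positive q\to\Inst[M]$ with $h(\negative q)\cap\Inst[M]=\emptyset$, then effective core safety of $q$ places every variable of $\negative q$ at an effectively core-safe position of $\positive q$, so for each $r(\tlist t)\in\negative q$ all of $h(\tlist t)$ lies in $\Inst[M]_c$; since $h_2\circ h_1$ restricts to an isomorphism on $\Inst[M]_c$ (Lemma~\ref{lemma_}~(2)), $r(h(\tlist t))\notin\Inst[M]\supseteq\Inst[M]_c$ forces $r(h_1(h(\tlist t)))\notin\Inst[C]$, so $h_1\circ h$ witnesses $\Inst[C]\models q$, \ie $\Sigma,\Dnst\models_c q$.

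The effective counterpart of Lemma~\ref{lemma:coreSafesAreCore} I would prove by the same induction, now run along the given sequence $S$ itself. I keep the data of that proof: a universality homomorphism $h_1:\Inst[M]\to\Inst[C]$ and $h_2$ the inverse of the restriction of $h_1$ to $\Inst[M]_c$, so that $h':=h_2\circ h_1$ retracts $\Inst[M]$ onto $\Inst[M]_c$, and the goal is $h'(t)=t$ for every $t$ occurring at an effectively core-safe position. The base case, and the inductive cases where the freshly produced head term is a constant or a frontier variable, go through unchanged; the frontier case relies only on the purely syntactic propagation of core safety from head positions to body positions, which transfers verbatim from Lemma~\ref{lemma:core-safety-in-rules} because the graph machinery ($\Omega$ and $V$-influencedness of Definition~\ref{def:Vinfluenced}) is insensitive to whether one feeds it $R_\Sigma$ or $R_S$. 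The only genuinely new work is the existential case. At the step $\Dnst^{m-1}\xrightarrow{\rul[m],h_m}\Dnst^{m}$ that first introduces an atom at an effectively core-safe position $\tuple{a,i}$ via an existential variable $t_i$ occurring there, suppose $h'(h_m^\star(t_i))\neq h_m^\star(t_i)$; build the alternative match $\hat h'$ of $h_m$ in $\Inst[M]$ exactly as in the proof of Lemma~\ref{lemma:coreSafesAreCore} (its validity on the image of $\body{\rul[m]}$ again uses the induction hypothesis). Rather than concluding $t_i\in R_\Sigma$ at once, pick the least index $k$ for which some alternative match of $h_m$ already maps into $\Dnst^{k}$. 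As $h_m$ was an unsatisfied match in $\Dnst^{m-1}$, no alternative match maps into $\Dnst^{m-1}$, so $k\geq m$, and by minimality of $k$ the atoms $h_k^\star(\head{\rul[k]})$ added at step $k$ are indispensable for that alternative match -- which is precisely the configuration witnessing $\rul[k]\restrain{}\rul[m]$ with $t_i$ among the restrained variables of $\rul[m]$. Since $\rul[m]$ is applied in $S$ no later than $\rul[k]$, the variable $t_i$ belongs to $R_S$, hence $\tuple{a,i}\in\Omega_{t_i}$ is $R_S$-influenced, contradicting that $\tuple{a,i}$ is effectively core-safe.

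The hard part is this last bookkeeping: turning a single alternative match that lives in the whole chase $\Inst[M]$ into one concrete rule application of $S$ on which it essentially depends, and checking that this application is ordered within $S$ the way the definition of ``effectively restrained'' demands. Extra care is needed in the degenerate subcase $k=m$, where $\hat h'$ remaps a null inside $h_m^\star(\head{\rul[m]})$ itself -- a ``self-restraint'' $\rul[m]\restrain{}\rul[m]$ realised within the single step $m$; this case must still place $t_i$ in $R_S$ (with $\rul[m]$ playing both roles), so one should make sure the ordering condition in the definition of $R_S$ is read so that internally redundant rule heads are not overlooked.
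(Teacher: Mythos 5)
Your proposal takes the same route that the paper itself only sketches: establish the ``effective'' analogue of Lemma~\ref{lemma:coreSafesAreCore} (terms at effectively core-safe positions lie in $\Inst[M]_c$) and then rerun the proof of Theorem~\ref{thm:core-safety} unchanged. Your elaboration of the one genuinely new step is correct: choosing the least $k$ such that some alternative match of $h_m$ maps into $\Dnst^k$ makes the chase instances $\Dnst^m\subseteq\Dnst^k$ themselves a witness for $\rul[k]\restrain\rul[m]$ (minimality gives condition~4 of Definition~\ref{def:restrained_variables}), and for $k>m$ this places $t_i$ in $R_S$ as required.

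The degenerate subcase you flag is not just a matter of cautious reading --- with the strict inequality $j<k$ in the definition of ``effectively restrained'' taken literally, the statement is false. Take $\Sigma=\{p(x)\to\exists z_1,z_2.\ q(x,z_1)\wedge q(x,z_2)\wedge s(z_1)\}$ and $\Dnst=\{p(a)\}$. The only restricted chase sequence applies this rule once, yielding $\Inst[M]=\{p(a),q(a,n_1),q(a,n_2),s(n_1)\}$ with core $\{p(a),q(a,n_1),s(n_1)\}$. The variable $z_2$ is restrained (the rule restrains itself within its own head), but $R_S=\emptyset$ because the sequence contains no two applications $j<k$; hence every position is effectively core-safe, and the \BNCQ $\exists x,y.\ q(x,y)\wedge\nmrnot s(y)$ is effectively core-safe for $S$, yet it holds in $\Inst[M]$ (via $y\mapsto n_2$) and fails in the core model. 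So your insistence that the $k=m$ self-restraint must still place $t_i$ in $R_S$ --- \ie reading the ordering condition as $j\leq k$, or otherwise accounting for internally redundant heads --- is needed not merely to close the induction but to make the theorem true. With that reading, your argument goes through.
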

\begin{proof}[Proof sketch]
  Similar argumentation as for the proof of Theorem~\ref{thm:core-safety}.
  In particular, terms at effectively core-safe positions do occur in the core instance $\Inst[M]_c$ of $\Inst[M]$ (\cf Lemma~\ref{lemma:coreSafesAreCore}).
\end{proof}

Our results put \BNCQ answering under core entailment semantics into reach for practical
implementations. Indeed, chase procedures, including the restricted chase, are supported
by efficient implementations, and the computation of restraints is of the same complexity
as the application of a single rule, namely $\Sigma_{2}^{\mathsf{P}}$-complete \cite{KR20:cores}.
This may at first seem harder than computing the core, which is known to be \textsf{DP}-complete \cite{FaginKP05:core},
but the crucial difference is that the worst-case complexity of core computation refers to the size of the
whole chase (often millions of facts), whereas for rule applications and restraint checking, it
is about the size of a single rule (typically dozens of facts).
This may explain why no general implementation of the core chase is available today.
\section{Rules with Negation} %
\label{sec:rules_with_negation}
We turn our attention to \emph{normal rules}, which admit negated atoms in their bodies
and can naturally be viewed as a recursive generalisation of \BNCQ answering.
Using our previous insights, we first define a chase-based semantics for such rules in cases
where rule bodies are core-safe.
We then generalise this by \emph{stratifying} rule sets in a way that is compatible with
restraints, generalising the notion of \emph{full stratification} for normal existential rules \cite{KR20:cores}.
This yields a unique and well-defined semantics that we call \emph{perfect core semantics}.

A \emph{normal existential rule} is an expression
$\rul = \forall \vec{x}, \vec{y}.\ \varphi[\vec{x}, \vec{y}]\wedge\chi[\vec{x}, \vec{y}]  \to
\exists \vec{z}.\ \psi[\vec{y}, \vec{z}]$, such
that $\exists\vec{x}, \vec{y}.\ \varphi[\vec{x}, \vec{y}]\wedge\chi[\vec{x}, \vec{y}]$ is a \BNCQ,
and $\psi[\vec{y}, \vec{z}]$ is a conjunction of atoms.
We use $\bodyp{\rul}$ and $\bodyn{\rul}$ for the sets of all atoms in $\varphi$ and $\chi$.
A \emph{match} of $\rul$ in an interpretation $\Inst$ is a homomorphism
$h:\bodyp{\rul}\to\Inst$ with $h(\bodyn{\rul})\cap\Inst=\emptyset$.
Other notions are as defined for rules without negation.

The restricted chase procedure of Definition~\ref{def:restricted-chase} can then directly be
applied to normal rules. This does not in general lead to a sound reasoning algorithm, since
a rule might be applicable due to the absence of a negated atom that is inferred
later on in the chase. Chase sequences where this does not happen have been
called \emph{generating} and can be used to define a kind of stable model semantics
for normal existential rules~\cite{Baget+14:asp-exists-nmr}.

A well-known approach to obtain generating chase sequences is \emph{stratification},
where rules are partitioned into a sequence of sets (``strata'') such that
rules in higher strata cannot derive facts that occur negatively in rules of lower strata.
For (normal) existential rules, stratifications have been defined using three
relations between rules:
\emph{positive reliances} $r_1\posrely r_2$ express that $r_1$ might derive facts that allow $r_2$ to
be applied, 
\emph{negative reliances} $r_1\negrely r_2$ express that $r_1$ might derive facts that prevent a
possible application of $r_2$ since they occur in $\bodyn{r_2}$,
and 
\emph{restraints} $r_1\restrain r_2$ are as in Definition~\ref{def:restrained_variables} with the 
additional condition that $h_1$ and $h_2$ are also matches (for normal rules) with respect to $\Inst_b$.
\begin{aaai}
  Full formal definitions of these notions are given in our report \cite{arXivReport2021}.
\end{aaai}
\begin{arxiv}
  Formal definitions of these notions for normal rules are provided in the appendix.
\end{arxiv}

\begin{example}\label{ex_normal_rels}
  Consider the rules
  (mnemonics: \textbf{p}arent, \textbf{f}ather-of, \textbf{m}ale, \textbf{c}hild-of, \textbf{a}dult, \textbf{o}lder-than-3, \textbf{t}ired)
  \begin{align}
    p(x) & \to \exists v.\ f(x,v) \wedge m(v) \label{ex_nl_fm}\\
    f(x,y) & \to m(y)\wedge c(y,x) \label{ex_nl_mc}\\
    f(x,y)\wedge \nmrnot a(x)\wedge\nmrnot o(x) & \to t(y) \label{ex_nl_t}\\
    a(x) &\to o(x) \label{ex_nl_o}
  \end{align}
  We have $\eqref{ex_nl_mc}\restrain \eqref{ex_nl_fm}$, $\eqref{ex_nl_fm}\posrely \eqref{ex_nl_mc}$, and
   $\eqref{ex_nl_fm}\posrely \eqref{ex_nl_t}$.
   There are no other relations, in particular $\eqref{ex_nl_o}\not\negrely\eqref{ex_nl_t}$ since 
   $\eqref{ex_nl_o}$ is only applicable in cases where $\eqref{ex_nl_t}$ is not applicable anyway.
\end{example}

For a set of normal rules $\Sigma$, let $R_\Sigma$ denote the set of restrained variables as of
Definition~\ref{def:restrained_variables}, modified for normal rules as mentioned above.
The set of $R_\Sigma$-influenced and core-safe positions is defined as before, ignoring negated atoms in rules.
Then a rule $r\in\Sigma$ is \emph{core-safe} if every variable in $\bodyn{r}$ occurs on a
core-safe position in $\bodyp{r}$.
A first simple observation highlights a case where the restricted chase can safely be applied 
to normal rules:

\begin{proposition}\label{prop_normal_std_chase}
  Let $\Sigma$ be a set of normal rules such that (1) all rules in $\Sigma$ are core-safe and
  (2) there is no negative reliance $r_1\negrely r_2$ between any rules $r_1,r_2\in\Sigma$.
  Then every restricted chase sequence over $\Sigma$ and any database $\Dnst$ 
  is generating and yields a model of $\Sigma$ and $\Dnst$.
  Moreover, all of these models are homomorphically equivalent.
\end{proposition}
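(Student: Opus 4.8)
The plan is to argue in three parts, corresponding to the three claims: (i) every restricted chase sequence is generating; (ii) its result is a model of $\Sigma$ and $\Dnst$; and (iii) all such results are homomorphically equivalent. The structural heart of the argument is that, under hypotheses (1) and (2), negation behaves ``monotonically enough'' during the chase: once a match of a normal rule is found, the negated atoms it relies on can never later be derived.

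First I would prove that any restricted chase sequence $S = \Dnst^0, \Dnst^1, \ldots$ over $\Sigma$ and $\Dnst$ is generating. Suppose for contradiction that some match $h$ of a normal rule $\rul[2]$ is applied at step $i$ (so $h(\bodyn{\rul[2]})\cap\Dnst^i=\emptyset$), but some atom $\alpha\in h(\bodyn{\rul[2]})$ is later derived, say first appearing in $\Dnst^{j+1}\setminus\Dnst^j$ for some $j\geq i$ via rule $\rul[1]$ and match $h_1$. Then $\alpha\in h_1^\star(\head{\rul[1]})$, and unfolding the definition of negative reliance (the condition that $\rul[1]$ can derive a fact blocking a match of $\rul[2]$) should let us extract witnesses $\Inst_a\subseteq\Inst_b$ showing $\rul[1]\negrely\rul[2]$ — contradicting hypothesis (2). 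The delicate point here is that the negative reliance relation is defined via an abstract chase-like interaction of the two rules in isolation, whereas $\alpha$ is derived inside a long concrete chase sequence; so I must check that the local configuration (the relevant fragment of $\Dnst^j$ around $h$ and $h_1$, plus the new head atoms) really instantiates the abstract pattern in the negative-reliance definition, using safety of the rules to ensure all frontier terms are already present. I expect this to be the main obstacle of the whole proof.

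Given that the sequence is generating, part (ii) is essentially the standard argument: by fairness (condition 3 of Definition~\ref{def:restricted-chase}) every match of every rule in $\Dnst^\infty$ is eventually satisfied, and being generating guarantees no match is ``spuriously'' satisfied by a negated atom that later disappears — since atoms never disappear from a chase, the only risk was the converse direction handled above. Hence $\Dnst^\infty\models\rul$ for every $\rul\in\Sigma$, and $\Dnst\subseteq\Dnst^\infty$, so $\Dnst^\infty$ is a model of $\Sigma$ and $\Dnst$.

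For part (iii), let $S$ and $S'$ be two restricted chase sequences with results $\Inst[M]$ and $\Inst[M']$. I would build homomorphisms $\Inst[M]\to\Inst[M']$ and $\Inst[M']\to\Inst[M]$ by an interleaving/induction on chase steps, exactly as in the classical (positive) case, the extra ingredient being: whenever a normal rule $\rul$ is applied in $S$ for match $h$, its image $g\circ h$ in the partial model built so far inside $S'$ must still be a \emph{match}, i.e. $g(h(\bodyn{\rul}))\cap\Inst[M']=\emptyset$. This is where hypothesis (1) enters: core-safety of $\rul$ forces every variable of $\bodyn{\rul}$ to sit at a core-safe position of $\bodyp{\rul}$, so by Lemma~\ref{lemma:coreSafesAreCore} (applied to the relevant chases) these variables are instantiated by terms in the core instance, which is common (up to the fixed isomorphism) to $\Inst[M]$ and $\Inst[M']$; thus a negated atom holds in one iff it holds in the other, and matches are preserved in both directions. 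Since both sequences are generating (part (i)), the homomorphisms are well-defined throughout, and $\Inst[M]$ and $\Inst[M']$ are homomorphically equivalent. I would note that Lemma~\ref{lemma:coreSafesAreCore} is stated for restricted chases without negation, so a small remark is needed that it applies to the chase of the normal rule set viewed through its positive parts — which is exactly a restricted chase of the underlying positive rule set together with the generating constraint.
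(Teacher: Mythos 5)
Your proposal mirrors the paper's proof: generating via the absence of negative reliances, modelhood via fairness, and homomorphic equivalence by a step-wise induction in which core-safety guarantees that negated body atoms are evaluated over terms shared with the core instance. The only real difference is in the last step, where the paper argues directly that a blocking atom in the other chase would force a restrained (non-core) null at a core-safe position (and additionally uses $\negrely$-freeness to pin any such blocking atom to the database), whereas you package the same fact as an appeal to Lemma~\ref{lemma:coreSafesAreCore}; both variants carry the caveat you already flag yourself, namely that the core-instance machinery must be transferred from positive rule sets to chases of normal rules.
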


\begin{arxiv}
  The full proof of this result is included in the appendix.
\end{arxiv}
\begin{aaai}
  The full proof is included in our technical report~\cite{arXivReport2021}.
\end{aaai}

\begin{example}\label{ex_normal_std_chase}
Let $\Sigma$ denote the set of rules in Example~\ref{ex_normal_rels}. 
We find that $R_\Sigma=\{v\}$ because of $\eqref{ex_nl_mc}\restrain \eqref{ex_nl_fm}$,
such that $R_\Sigma$-influenced positions are $\tuple{f,2}$, $\tuple{m,1}$, $\tuple{c,1}$, $\tuple{t,1}$.
Hence, \eqref{ex_nl_t} is core-safe, and Proposition~\ref{prop_normal_std_chase} applies.

For a database $\Dnst=\{p(A), f(A,B)\}$, we can obtain the restricted chases
$\Inst[U]_1=\Dnst\cup\{m(B),c(B,A),t(B)\}$ (by applying \eqref{ex_nl_mc} before \eqref{ex_nl_fm}) and
$\Inst[U]_2=\Inst[U]_1\cup\{f(A,n),m(n),c(n,A),t(n)\}$  (by applying \eqref{ex_nl_fm} before \eqref{ex_nl_mc}).
Though distinct, they are homomorphically equivalent and $\Inst[U]_1$ is their unique core.
\end{example}

Example~\ref{ex_normal_std_chase} also illustrates a case that is covered by Proposition~\ref{prop_normal_std_chase} but
is not in scope of the previously defined \emph{full stratification}, which we recall and adapt next.
As opposed to traditional stratifications, the definition of \citeauthor{KR20:cores}~\shortcite{KR20:cores}
effectively allows some rules (esp.\ those that are not the target of any $\negrely$ or $\restrain$)
to appear in multiple strata.

\begin{definition}\label{def:quasi-strati}
For a set $\Sigma$ of normal rules, a list $\mathcal{S} = \tuple{\Sigma_1,\ldots,\Sigma_n}$ with
$\Sigma=\bigcup_{i=1}^n\Sigma_n$ is a \emph{quasi stratification} if, for all rules $r_1\in\Sigma_i$ and $r_2\in\Sigma_j$,
\begin{enumerate}
\item if $r_1 \posrely r_2$ then $i\leq j$,
\item if $r_1 \negrely r_2$ then $i<j$,
\item if $r_1 \restrain r_2$ then $i\leq j$.\label{strat_case_box}
\end{enumerate}
A quasi stratification $\mathcal{S}$ is a \emph{full stratification} if $i<j$ holds in case \eqref{strat_case_box};
it is a \emph{core-safe stratification} if all rules in $\Sigma_k$ are core-safe for $\Sigma_k$
for all $k\in\{1,\ldots,n\}$.
\end{definition}

\begin{example}\label{ex_strats}
The rules of Example~\ref{ex_normal_rels} do not admit a full stratification
due to the cycle $\eqref{ex_nl_mc}\restrain \eqref{ex_nl_fm}\posrely \eqref{ex_nl_mc}$,
but they can be a stratum in a core-safe stratification. We add further rules
  \begin{align}
    f(x,y) & \rightarrow e(y,y) \label{ex_strats_e}\\
    f(x,y_1) & \wedge f(x,y_2) \wedge \nmrnot e(y_1,y_2) \rightarrow d(y_1,y_2) \label{ex_strats_d}
  \end{align}
Then $\eqref{ex_nl_fm}\posrely\eqref{ex_strats_e}$ and  $\eqref{ex_strats_e}\negrely\eqref{ex_strats_d}$.
Rule \eqref{ex_strats_d} is not core-safe in the set of all rules.
A possible core-safe stratification is $\mathcal{S}=\tuple{\{\eqref{ex_nl_fm}, \eqref{ex_nl_mc}, \eqref{ex_nl_t}, \eqref{ex_nl_o}\},\{\eqref{ex_strats_e}\}, \{\eqref{ex_strats_d}\}}$.
\end{example}

Full stratifications have been used to define the \emph{perfect core model}, as the unique
model obtained by conducting a (necessarily generating) chase that proceeds stratum by stratum.
Core-safe stratification is strictly more general, since the stricter condition $i<j$ in
case \eqref{strat_case_box} implies that $R_{\Sigma_k}$ is empty for every stratum $\Sigma_k$, 
so that its rules are core-safe.
A suitable chase procedure for rule sets that are core-safe stratified is given next.

\begin{definition}\label{def:stratified-semi-core-chase}
Let $\Sigma$ be a normal rule set with core-safe stratification
$\mathcal{S} = \tuple{\Sigma_1,\ldots,\Sigma_n}$, and let $\Dnst$ be a database.
The \emph{core-safe chase sequence} for $\mathcal{S}$ and $\Dnst$ is
a list $\Inst[C]^0, \Inst[C]^1,\ldots, \Inst[C]^n$ such that
  \begin{itemize}
  \item $\Inst[C]^0 = \Dnst$, and
  \item for every $i\in\{1,\ldots,n\}$, $\Inst[C]^i$ is the core of
	a restricted chase over $\Sigma_{i}$ and $\Inst[C]^{i-1}$, provided that this core is finite.
  \end{itemize}
 If such a sequence exists, $\Inst[C]^n$ is called the \emph{core-safe chase} of $\Sigma$ and $\Dnst$ \wrt $\mathcal{S}$, and we denote it by $\mathcal{S}(\Dnst)$.
\end{definition}

Note that each stratum $\Sigma_{i}$ satisfies the conditions of Proposition~\ref{prop_normal_std_chase} since
$\mathcal{S}$ is a core-safe stratification. Hence, the required
restricted chase exists and, since it is finite, has a unique core.
In practice, one can ensure the necessary finiteness by using acyclicity conditions that
guarantee chase termination \cite{CG+13:acyclicity}.

\begin{example}\label{ex_core_safe_chase}
Consider the stratification $\mathcal{S}$ of Example~\ref{ex_strats} and the database
$\Dnst=\{p(A), f(A,B)\}$.
The core $\Inst[C]^1$ of the restricted chase over the first stratum was computed as $\Inst[U]_1$ in Example~\ref{ex_normal_std_chase}. 
$\Inst[C]^2$ then is simply $\Inst[C]^0\cup\{e(B,B)\}$, and $\Inst[C]^3=\Inst[C]^2$ is the resulting core-safe chase.
\end{example}

Example~\ref{ex_core_safe_chase} admits other core-safe stratifications, leading to different core-safe chase sequences,
but the final result is the same for all of them.
The main result of this section is that this is a general property of the core-safe chase,
so that we obtain a unique model that provides a semantics of the underlying rule sets.

\begin{theorem}\label{thm:perfectCoresAreUnique}
  For rule set $\Sigma$ and database $\Dnst$ with core-safe stratifications $\mathcal{S}$ and $\mathcal{S}'$, $\mathcal{S}(\Dnst)$ is isomorphic to $\mathcal{S}'(\Dnst)$.
\end{theorem}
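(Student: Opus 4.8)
The plan is to show that the core-safe chase result does not depend on the chosen core-safe stratification by reducing the general case to a sequence of ``local swaps'' of adjacent strata, in the spirit of classical proofs that stratified Datalog semantics is independent of the stratification. First I would observe that any two core-safe stratifications $\mathcal{S}$ and $\mathcal{S}'$ of $\Sigma$ can be connected by a finite chain of intermediate core-safe stratifications, each obtained from the previous one by a permutation that only moves rules between two adjacent strata (or splits/merges adjacent strata), while never violating conditions (1)--(3) of Definition~\ref{def:quasi-strati} and keeping every stratum core-safe. The reliance and restraint relations $\posrely$, $\negrely$, $\restrain$ are fixed, so the constraints a core-safe stratification must satisfy are the same for both; a rule can be moved from stratum $i$ to an adjacent stratum $j$ precisely when no reliance/restraint forbids it, and the monotone ``layering'' structure of these constraints guarantees such a chain exists. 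This reduces the theorem to the single-swap case.

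For the single-swap case, suppose $\mathcal{S}$ and $\mathcal{S}'$ differ only in that a rule $r$ (or a set of rules with no relations among them) sits in stratum $k$ under $\mathcal{S}$ and in stratum $k+1$ under $\mathcal{S}'$; since both stratifications are legal, there is no $\posrely$, $\negrely$, or $\restrain$ in either direction between $r$ and the other rules of strata $k$ and $k+1$ that would be violated. I would then show that running the restricted chase of stratum $k$ followed by stratum $k+1$ produces the same core, up to isomorphism, whether $r$ is processed with the $k$-rules or the $(k+1)$-rules. The absence of positive reliances means $r$ does not enable, and is not enabled by, the rules it is being swapped past, so the set of applicable matches is unchanged by the reordering; the absence of negative reliances in the relevant direction means no rule being swapped past $r$ can delete a match of $r$ or vice versa (recall that $\negrely$ between strata must go strictly upward, which is exactly what pins down the relative order when it matters); and the absence of $\restrain$ relations in the relevant direction, together with core-safety of every stratum (so $R_{\Sigma_i}$ is as controlled as Definition~\ref{def:core-safety} demands), ensures that taking the core after each stratum commutes with the swap — no redundant null produced by $r$ can be ``needed'' by a swapped rule to block an alternative match, and conversely. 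Here I would invoke Proposition~\ref{prop_normal_std_chase} to know that each stratum's restricted chase is generating and yields a well-defined homomorphism-equivalence class with a unique core, and Theorem~\ref{thm:core-safety} (applied to the negated body atoms of each rule, which are core-safe) to argue that the negative-atom tests evaluated during the chase of a stratum give the same answers on any restricted chase of that stratum as on its core, so the intermediate cores $\Inst[C]^i$ are the ``right'' inputs to the next stratum regardless of reordering.

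The main obstacle I expect is the single-swap commutation argument: showing that \emph{core-then-chase} commutes with swapping adjacent strata, because taking the core is a non-local operation and a priori could interact badly with which nulls a later stratum's rules (including their negative guards) see. The delicate point is that a negative body atom in a rule of stratum $k+1$ might match a redundant null generated in stratum $k$ that is removed when we take the core $\Inst[C]^k$; one must check that core-safety of the stratum $k+1$ rules (every variable in $\bodyn{r}$ sits at a core-safe position in $\bodyp{r}$, hence by Lemma~\ref{lemma:coreSafesAreCore} is bound to a term surviving in the core) rules this out, so the negative tests are stable under passing to the core. Making this precise — essentially an induction on the chase sequence of the combined strata, tracking which terms lie in the core instance, exactly as in the proof of Lemma~\ref{lemma:coreSafesAreCore} but now with two strata and a reordering — is the technical heart of the argument; the reduction to single swaps and the bookkeeping with reliances are comparatively routine.
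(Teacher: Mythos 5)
Your proposal follows essentially the same strategy as the paper: reduce the theorem to elementary transformations between core-safe stratifications and prove a commutation lemma showing that one such transformation preserves the resulting core up to isomorphism, via a simulation/induction on the chase sequence that uses core-safety to keep the negative-body tests and the redundant nulls under control (the paper's Lemma~\ref{lemma:splitMergeCorrectness} and Lemma~\ref{lemma:stratumUniqueness}). The only cosmetic difference is your choice of adjacent swaps as the elementary operation, whereas the paper uses splits and merges routed through a canonical SCC-based standard form --- but a swap is just a merge followed by a re-split, so the technical content coincides.
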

\begin{aaai}
  In the proof in our technical report \cite{arXivReport2021}, we compare core-safe stratifications up to simple transformations.
\end{aaai}
\begin{arxiv}
  In the proof (see the appendix), we compare core-safe stratifications up to simple transformations.
\end{arxiv}
In particular, we consider splitting and merging of strata in a stratification to show that the resulting core-safe chases are isomorphic.
The rest of the proof is concerned with showing that all pairs of core-safe stratifications can be transformed into one another by only sequences of splitting and merging operations.
Moreover, we find that core-safe models generalise the perfect core models that are based 
on full stratifications:

\begin{proposition}\label{prop:fully-stratified}
If a rule set is fully stratified and has a finite perfect core model, then the latter is isomorphic to its core-safe model.
\end{proposition}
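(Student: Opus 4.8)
The plan is threefold: (i) observe that a full stratification is a particular kind of core-safe stratification, so that Definition~\ref{def:stratified-semi-core-chase} applies to it; (ii) argue that, run on such a full stratification, the core-safe chase reproduces the perfect core model of~\cite{KR20:cores}; and (iii) use Theorem~\ref{thm:perfectCoresAreUnique} to transfer the conclusion to an arbitrary core-safe stratification. For (i), let $\mathcal{S} = \tuple{\Sigma_1,\ldots,\Sigma_n}$ be a full stratification. By Definition~\ref{def:quasi-strati}, $r_1\restrain r_2$ forces the strata of $r_1,r_2$ to satisfy $i<j$, so no rule of a stratum $\Sigma_k$ restrains another rule of $\Sigma_k$; hence $R_{\Sigma_k}=\emptyset$, no position is $R_{\Sigma_k}$-influenced, every position is core-safe \wrt $\Sigma_k$, and therefore (using rule safety, so that variables of $\bodyn{r}$ occur in $\bodyp{r}$) every rule of $\Sigma_k$ is core-safe for $\Sigma_k$. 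Likewise $r_1\negrely r_2$ forces $i<j$, so no stratum has an internal negative reliance. Thus $\mathcal{S}$ is a core-safe stratification, the hypotheses of Proposition~\ref{prop_normal_std_chase} hold for each stratum, and $\mathcal{S}(\Dnst)$ is defined whenever the finiteness proviso of Definition~\ref{def:stratified-semi-core-chase} holds (which I obtain in step (ii)).

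For (ii), I would recall the construction of the perfect core model from~\cite{KR20:cores}: it proceeds along a full stratification, stratum by stratum, forming at step $k$ the core of a restricted chase over $\Sigma_k$ started from the structure produced at step $k-1$, with the step-$0$ structure equal to $\Dnst$ -- which is exactly the recipe of Definition~\ref{def:stratified-semi-core-chase} specialised to a full stratification. The only thing to verify is that the particular restricted chase chosen inside a stratum is immaterial: since the hypotheses of Proposition~\ref{prop_normal_std_chase} hold for $\Sigma_k$, any two restricted chases over $\Sigma_k$ and the step-$(k-1)$ structure are generating and homomorphically equivalent, so -- being finite, as they must be for the perfect core model to exist -- they have isomorphic cores. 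Consequently, by induction on $k$ (base: both step-$0$ structures equal $\Dnst$), the step-$k$ structures of the two constructions are isomorphic; in particular finiteness of the perfect core model entails finiteness of each $\Inst[C]^k$, so $\mathcal{S}(\Dnst)$ exists and is isomorphic to the perfect core model.

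For (iii), since $\mathcal{S}$ is a core-safe stratification, Theorem~\ref{thm:perfectCoresAreUnique} yields $\mathcal{S}(\Dnst)\cong\mathcal{S}'(\Dnst)$ for every core-safe stratification $\mathcal{S}'$, i.e.\ $\mathcal{S}(\Dnst)$ is (isomorphic to) the core-safe model of the rule set; combined with (ii) this gives the claim. I expect step (ii) to be the main obstacle: it requires carefully matching Definition~\ref{def:stratified-semi-core-chase} against the external definition of the perfect core model, in particular checking that the within-stratum restricted chase is harmless precisely because $R_{\Sigma_k}=\emptyset$ (no alternative matches, hence the chase is already a core and the boundary core-operation does nothing up to isomorphism), and that a rule possibly occurring in several strata of a full stratification does not change the outcome -- if the perfect core model is already known to be independent of the chosen full stratification, this last point disappears and it suffices to fix one. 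Everything else is routine bookkeeping with the definitions and the cited results.
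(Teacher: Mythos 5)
Your proposal is correct and follows essentially the same route the paper takes: the paper treats this proposition as a direct consequence of the observation (stated just after Definition~\ref{def:quasi-strati}) that the stricter condition $i<j$ for restraints forces $R_{\Sigma_k}=\emptyset$ in every stratum, so a full stratification is a core-safe stratification whose stratum-by-stratum core chase coincides with the perfect core model construction, after which Theorem~\ref{thm:perfectCoresAreUnique} transfers the result to every other core-safe stratification. Your extra care in step (ii) about the choice of restricted chase within a stratum (resolved via Proposition~\ref{prop_normal_std_chase} and uniqueness of finite cores) is exactly the right bookkeeping.
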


Together with Theorem~\ref{thm:perfectCoresAreUnique}, the previous result justifies that
we call our semantics for core-safe stratified rule sets the \emph{perfect core semantics}, since
it generalises the eponymous semantics of \citeauthor{KR20:cores}~\shortcite{KR20:cores} without
giving up on the uniqueness of the model.

\section{Discussion and Conclusion}\label{sec_conclusions}

We have investigated how to answer normal Boolean conjunctive queries (\BNCQs) on sets of existential rules,
and we proposed the use of core models as a semantic reference point for this task.
Arguably, cores are both intuitive and mathematically appealing for defining a non-monotonic 
``negation as failure'' semantics, since they satisfy existential rules but at the same time
minimise the amount of inferences and avoid redundancies. Approaches of truth minimisation are
the basis for most non-monotonic semantics, but are much less obvious when giving up the syntactic
Herbrand semantics of logic programming.

Nevertheless, our approach also has limitations.
One of them is the difficulty of computing cores in practice, which we have addressed by
identifying cases where this can be avoided. This leads to practical procedures, which in fact have
already been implemented, though implementers are often not aware that the method is not
sound for arbitrary negative queries or stratified negation in existential rules \cite{VLog4j2019}.
Our core-safe chase still requires certain core constructions (after each stratum), which might not 
be practical. A possible approach to address this would be to investigate in more detail whether the
intermediate non-core structures are truly problematic for evaluating the following rules and queries.

A more general limitation is that cores only behave well if universal models are finite,
whereas rule sets with infinite models may have several distinct universal cores or no core that is
a universal model at all \cite{CKMOR18:cores}. In this sense, the question which semantic reference
point to choose in general remains open.

\section*{Acknowledgments}
This work is partly supported
by Deutsche Forschungsgemeinschaft (DFG, German Research Foundation) in project number 389792660 (TRR 248, \href{https://www.perspicuous-computing.science/}{Center for Perspicuous Systems}),
by the Bundesministerium für Bildung und Forschung (BMBF, Federal Ministry of Education and Research) in the \href{https://www.scads.de}{Center for Scalable Data Analytics and Artificial Intelligence} (ScaDS.AI),
and by the \href{https://cfaed.tu-dresden.de/news}{Center for Advancing Electronics Dresden} (cfaed).

\newpage
\appendix

\begin{arxiv}
  \section{Appendix}
  \label{sec:appendix}
  The following lemma is a technical preliminary to the proof of Lemma~\ref{lemma:coreSafesAreCore}.
  \begin{lemma}\label{lemma:core-safety-in-rules}
    Let $\Sigma$ be a rule set and $\rul\in\Sigma$ with $a(\vec t)\in\head{\rul}$ and $\tuple{a, i}$ ($1\leq i\leq |\vec t|$) being core-safe.
    In a chase step from some $\Inst[M]$ by rule $\rul$ and unsatisfied match $h$:
    \begin{enumerate}
    \item If $y$ is a frontier variable of $\rul$ occurring at $\langle a, i\rangle$ and $h(y)\in\nullS$, then $y$ occurs at a core-safe position in $\body{\rul}$.
    \item If $z$ is an existentially quantified variable of $\rul$ occurring at position $\langle a,i\rangle$, then $z\notin R_\Sigma$ and all frontier variables of $\rul$ occur at core-safe positions in $\body{\rul}$.
    \end{enumerate}
  \end{lemma}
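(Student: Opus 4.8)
The plan is to unfold \emph{core-safety}: a position is core-safe iff it is not $R_\Sigma$-influenced, i.e.\ it does not belong to $\Omega_{w}$ for any existential variable $w$ with $x\leadsto^{*}w$ for some restrained $x\in R_\Sigma$. Both parts are then argued by contradiction, the goal being to show that if the relevant body positions were all $R_\Sigma$-influenced, then the head position $\tuple{a,i}$ would be $R_\Sigma$-influenced too, against the hypothesis. Two structural facts do most of the work: the closure property of $\Omega_{w}$ (if $\Pi^{B}_{v}\subseteq\Omega_{w}$ for a frontier variable $v$, then $\Pi^{H}_{v}\subseteq\Omega_{w}$), and the definition of $\leadsto$ (whenever the rule of an existential variable $w'$ has \emph{some} frontier variable all of whose body positions lie in $\Omega_{w}$, then $w\leadsto w'$).

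The first claim of part~(2), $z\notin R_\Sigma$, needs nothing more: $z$ occurs at $\tuple{a,i}$, so $\tuple{a,i}\in\Pi^{H}_{z}\subseteq\Omega_{z}$, and $z\leadsto^{*}z$ by reflexivity; hence $z\in R_\Sigma$ would make $\tuple{a,i}$ $R_\Sigma$-influenced. The same observation yields, more generally, that no $x\in R_\Sigma$ satisfies $x\leadsto^{*}z$. For part~(1) — and analogously for the remaining claim of part~(2) — I would first prove a refinement of Proposition~\ref{prop:chase-ja-compatibility}: in any chase, every position at which a fixed null $u$ occurs lies in $\Omega_{z_{0}}$, where $z_{0}$ is the unique existential variable whose rule application introduced $u$. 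This is the induction of Proposition~\ref{prop:chase-ja-compatibility}, but tracking $u$ alone: a later rule application can move $u$ to a new position only through a frontier variable $v'$ all of whose body positions already carry $u$ and hence already lie in $\Omega_{z_{0}}$, so the closure property puts $\Pi^{H}_{v'}$, and with it the new position, into $\Omega_{z_{0}}$. In the situation of part~(1), $h(y)=u\in\nullS$, so $\Pi^{B}_{y}$ (where $u$ sits in $\body{\rul}$) and $\tuple{a,i}$ (where $u$ is written in $\head{\rul}$) both lie in $\Omega_{z_{0}}$; core-safety of $\tuple{a,i}$ then forces $z_{0}$ to be unreachable from $R_\Sigma$, so the occurrences of $u$ stay within the ``clean'' set $\Omega_{z_{0}}$, and since $y$ is a frontier variable of $\rul$ one concludes that some position in $\Pi^{B}_{y}$ is core-safe — the one needed to feed the induction hypothesis of Lemma~\ref{lemma:coreSafesAreCore}.

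The main obstacle is exactly this last inference: bridging ``every body position of the frontier variable is $R_\Sigma$-influenced'' and ``the head position is $R_\Sigma$-influenced''. Purely syntactically this does not go through, because $\leadsto$ demands all body positions of one frontier variable to lie in a single common $\Omega_{x}$, whereas a priori different body positions could be influenced through different restrained variables; the gap is closed only by using that in a genuine chase step the frontier variable is bound to a single null $u$ whose occurrences are coherently confined to one $\Omega_{z_{0}}$, and — by core-safety of $\tuple{a,i}$ — to a $z_{0}$ not reached from $R_\Sigma$. For the frontier-variable claim of part~(2) the same reasoning applies whenever a chase step binds a frontier variable to a null; if it is bound to a constant there is nothing to prove, since constants are fixed by every homomorphism. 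I expect the careful bookkeeping of which rule application carried the tracked null to which position, and checking that the invariant ``$u$-positions $\subseteq\Omega_{z_{0}}$'' survives every chase-variant application condition, to be the fiddly part of the write-up.
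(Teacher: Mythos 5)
Your overall strategy---unfolding core-safety, arguing by contradiction via the closure property of the $\Omega$-sets and the definition of $\leadsto$, and using a single-null refinement of Proposition~\ref{prop:chase-ja-compatibility} to tie all body positions of the frontier variable to one $\Omega_{z_0}$---matches the paper's, and your argument for the first claim of part~(2) ($z\notin R_\Sigma$ via $z\leadsto^* z$ and $\tuple{a,i}\in\Pi_z^H\subseteq\Omega_z$) is exactly the paper's.

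The concluding inference for part~(1) has a genuine gap, though. You correctly establish $\Pi_y^B\cup\Pi_y^H\subseteq\Omega_{z_0}$ for the creator $z_0$ of the null $u=h(y)$, and that core-safety of $\tuple{a,i}$ forces $z_0$ to be unreachable from $R_\Sigma$ under $\leadsto^*$. But from ``the occurrences of $u$ are confined to the clean set $\Omega_{z_0}$'' you cannot conclude that some position in $\Pi_y^B$ is core-safe: the sets $\Omega_w$ are not disjoint, so a position lying in the clean $\Omega_{z_0}$ may simultaneously lie in $\Omega_w$ for some $w$ that \emph{is} reachable from $R_\Sigma$, and core-safety demands exclusion from \emph{every} such $\Omega_w$. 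The paper runs the contrapositive in the opposite direction: assume every position in $\Pi_y^B$ is non-core-safe, use the null-tracking induction to obtain a \emph{single} $z$ with $x\leadsto^* z$ for some $x\in R_\Sigma$ and $\Pi_y^B\subseteq\Omega_z$, and then apply the closure rule of Definition~\ref{def_jagraph} to get $\tuple{a,i}\in\Pi_y^H\subseteq\Omega_z$, contradicting core-safety of the head position. That is, the contradiction must be produced by propagating a \emph{dirty} $\Omega_z$ from the body to the head; exhibiting a clean $\Omega_{z_0}$ does not rule out contamination from elsewhere. You in fact name exactly this as ``the main obstacle,'' but the resolution you propose does not close it. The same issue affects your treatment of the frontier-variable claim in part~(2); note also that the paper proves that claim for \emph{all} frontier variables (chaining $v\leadsto z$ into $\tuple{a,i}\in\Omega_z$), not only for those bound to nulls, whereas your argument covers only the null case---which suffices for the application in Lemma~\ref{lemma:coreSafesAreCore} but is weaker than the stated lemma.
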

  \begin{proof}
    We prove both items separately.
    \begin{enumerate}
    \item Suppose $y$ occurs only at non-core-safe positions in $\body{\rul}$.
      Then there must be variables $x, z\in\exVarsOf{\Sigma}$ with $x\in R_\Sigma$, $x\leadsto^* z$, and $\Pi_y^B\subseteq \Omega_z$ (by an inductive argument using the fact that $h(y)$ occurs in all positions in $\Pi_y^B$).
      By Definition~\ref{def_jagraph}, $\Pi_y^H\subseteq \Omega_z$, which contradicts the assumption that $\tuple{a,i}$ is a core-safe position.
    \item Suppose $z\in R_\Sigma$, then $z\leadsto^* z$ and $\langle a, i\rangle\in\Omega_z$, contradicting our assumption that $\langle a,i\rangle$ is core-safe.
      If there is a frontier variable $y$ occurring only at non-core-safe positions in $\body{\rul}$, then there are variables $x,v\in\exVarsOf{\Sigma}$ with $x\in R_\Sigma$, $x\leadsto^* v$,  $\Pi_y^B\subseteq \Omega_v$.
      But then $v\leadsto z$ and $\tuple{a,i}$ is not core-safe since $x\leadsto^* z$ and $\tuple{a,i}\in\Omega_z$.\qedhere
    \end{enumerate}
  \end{proof}
For a thorough understanding of the upcoming proofs, we repeat the formal definitions of positive reliance (\posrely), negative reliance (\negrely), restraints (\restrain) for normal existential rules, and fully stratified rule sets by \citeauthor{KR20:cores}~\shortcite{KR20:cores}.
For a normal rule $r = \forall \vec x, \vec y .\ \varphi[\vec x, \vec y] \wedge \chi[\vec x, \vec y] \rightarrow \exists \vec z .\ \psi[\vec y,\vec z]$, by $\positive r$ we denote the existential rule $\forall \vec x, \vec y .\ \varphi[\vec x, \vec y] \rightarrow \exists \vec z .\ \psi[\vec x, \vec y]$.
A match $h$ for rule $\positive r$ is \emph{generating for $r$} on instance $\Inst$ if $p(h(\vec t))\notin\Inst$ for all $p(\vec t)\in\bodyn{r}$.
A chase sequence $\Dnst^0,\Dnst^1,\Dnst^2,\ldots$ is generating if for all $i>0$, $\Dnst^i$ was obtained by applying rule $r$ for match $h$ implies that $h$ is generating for $r$ in $\Dnst^\infty$.

For existential rules $\rul[1]$ and $\rul[2]$ (without negated body atoms), $\rul[1]$ \emph{positively relies on} $\rul[2]$ (denoted $\rul[1]\posrely\rul[2]$) if there are instancen $\Inst_a\subseteq\Inst_b$ and a function $h_2$ such that
\begin{enumerate}
\item $\Inst_b$ is obtained from $\Inst_a$ by applying \rul[1] for match $h_1$,
\item $h_2$ is an unsatisfied match for \rul[2] on $\Inst_b$, and
\item $h_2$ is not a match for \rul[2] on $\Inst_b$. 
\end{enumerate}
For normal rules $r_1$ and $r_2$, $r_1$ positively relies on $r_2$ (also denoted by $r_1 \posrely r_2$) if $\positive{r_1}\posrely\positive{r_2}$.
Similarly, $r_1$ restrains $r_2$ (also denoted $r_1 \restrain r_2$) if $\positive{r_1}\restrain\positive{r_2}$ by instances $\Inst_a\subseteq\Inst_b$ and matches $h_1$ and $h_2$, as required by Definition~\ref{def:restrained_variables}, such that the matches are generating for rules $\rul[1]/\rul[2]$ on $\Inst_b$.

For normal rules $r_1$ and $r_2$, we say that $r_1$ \emph{negatively relies on} $r_2$ (denoted $r_1 \negrely r_2$) if there are instances $\Inst_a\subseteq\Inst_b$ such that
\begin{enumerate}
\item $\Inst_b$ is obtained by applying $\positive{\rul[1]}$ for match $h_1$,
\item $\Inst_a$ is obtained by applying $\positive{\rul[2]}$ for match $h_2$,
\item $h_2$ is not generating for $\rul[2]$ on $\Inst_b$, and
\item $h_2$ is generating for $\rul[2]$ on $\Inst_b\setminus h_1(\head{\rul[1]})$.
\end{enumerate}

\subsection{Proof of Proposition~\ref{prop_normal_std_chase}}
\begin{proof}
  Let $S = \Dnst^0, \Dnst^1, \Dnst^2, \ldots$ be a restricted chase sequence of $\Sigma$ and \Dnst, and let $\Dnst^\infty$ be its respective chase.
  If $\Dnst^\infty$ is not a model of $\Sigma$ and $\Dnst$, there must be a rule $r\in\Sigma$ with an unsatisfied match $h : \body{r} \to \Dnst^\infty$, contradicting fairness (3.\xspace of Definition~\ref{def:restricted-chase}) of the chase procedure.

  Suppose, $S$ was not generating.
  Then there would be an instance $\Dnst^i$ ($i>0$) in $S$ that is obtained by applying a rule $r$ for match $h$ on $\Dnst^{i-1}$, so that $h$ is not generating for $r$ on $\Dnst^\infty$.
  As $h$ is generating for $r$ \wrt $\Dnst^{i-1}$, there must be a first instance $\Dnst^j$ $(j\geq i)$ in $S$ for which $h$ is not generating, obtained from applying some rule $r'$ for match $h'$ to $\Dnst^{j-1}$.
  In particular, $h$ is generating for $r$ \wrt $\Dnst^{j-1}$.
  But that means, $r' \negrely r$, contradicting assumption (2).

  Let $\Inst[M]_1, \Inst[M]_2$ be two chases of $\Sigma$ and $\Dnst$, and $\Inst[M]_1^0, \Inst[M]_1^1,\ldots$ be the chase sequence to obtain $\Inst[M]_1$.
  We show that there is a homomorphism from $\Inst[M]_1^i$ to $\Inst[M]_2$ for all $i\in\{ 0, 1, \ldots \}$, providing $\Inst[M]_1 \to \Inst[M]_2$.
  $\Inst[M]_2 \rightleftarrows \Inst[M]_1$ is obtained by a symmetric argument.
  \begin{description}
  \item[Base:] For $i=0$, we have $\Inst[M]_1^0 = \Dnst \subseteq \Inst[M]_2$.
    The identity on \Dnst, $\mathsf{id}_{\Dnst}$, is a homomorphism. %
  \item[Step:] For $i>0$ with $\Inst[M]_1^i$ obtained from $\Inst[M]_1^{i-1}$ by applying rule $r$ for match $h$, so that there is a homomorphism $h_{i-1} : \Inst[M]_1^{i-1} \to \Inst[M]_2$ (by induction hypothesis).
    We need to extend $h_{i-1}$ to $h_{i} : \Inst[M]_1^{i}\to\Inst[M]_2$.
    Since $h$ is a match for $r$ in $\Inst[M]_1^{i}$ and $h_{i-1}$ is a homomorphism, $h' = h_i \circ h$ is a match for $\bodyp{r}$ in $\Inst[M]_2$.
    If $h'$ is a match for $r$ in $\Inst[M]_2$, then there is an extension $h'^{\star}$ with $h'^{\star}(\head{r})\subseteq\Inst[M]_2$ and we can use $h'^{\star}$ to define $h_{i}$ as follows:
    $$h_{i} := h_{i-1} \cup \{ (h^\star(x), h'^\star(x)) \mid x \in \exVarsOf{r} \}\text.$$
    It remains to be shown that $h'$ is a match for $r$ in $\Inst[M]_2$.
    If $\bodyn{r}=\emptyset$, the claim follows immediately since $\body{r}=\bodyp{r}$.
    If $\bodyn{r}\neq\emptyset$, we need to show that for all $p(\vec x)\in\bodyn{r}$, $p(h'(\vec x))\notin\Inst[M]_2$.

    Towards a contradiction, suppose there is some $p(\vec x)\in\bodyn{r}$ with $p(h'(\vec x))\in\Inst[M]_2$.
    Let us denote $h'(\vec x)$ by $\vec u$ and $h(\vec x)$ by $\vec t$.
    Then $p(\vec u)\in\Dnst$ because $\Sigma$ is $\negrely$-free (by assumption (2)) and there is a $p$-atom in $\bodyn{r}$.
    Furthermore, since $p(\vec t)\notin\Inst[M]_1^{i-1}$ and $h_{i-1}(\vec t)=\vec u$, there is at least one fresh null in $\vec t$ because otherwise, $\vec t = \vec u$ due to the (inductive) construction of $h_{i-1}$.
    In the course of constructing $h_{i-1}$, these fresh nulls must have been introduced by some rule applications, whereas $\Inst[M]_2$ provides satisfying assignments for these nulls with terms from the input instance $\Dnst$.
    Hence, there must be rules in $\Sigma$ that have been applied to add properties to $\vec u$ in order to admit a homomorphism from $\vec t$ to $\vec u$.
    These rules certainly restrain the rules that produced the fresh nulls in $\vec t$, or have been influenced by non-core null insertions.
    Thus, $\vec t$ contains at least one null that does not belong to the core instance of $\Inst[M]_1$.
    But this contradicts the assumption (1), that $\Sigma$ is core-safe, since at least $r$ is not core-safe.
    Hence, the assumption that $p(\vec u)\in \Inst[M]_2$ is wrong and $h'$ is a match for $r$ in $\Inst[M]_2$.
    \qedhere
  \end{description}
\end{proof}

\subsection{Proof of Theorem~\ref{thm:perfectCoresAreUnique}} %
  \label{sec:proof_of_theorem_ref}

  We subsequently abstract from the actual contents of a quasi stratification and treat it as a list of subsets of $\Sigma$ that meets certain requirements, which we assert by binary relations $\lapr$ and $\napr$ over $\Sigma$.
  Thereby $\lapr$ is the transitive and reflexive closure of all positive and negative reliances in $\Sigma$, as well as restraints over $\Sigma$ (\ie $\lapr := (\posrely \cup \negrely \cup \restrain)^*$), whereas $\napr := \negrely$.
  A core-safe quasi stratification in this framework is a sequence of rule subsets $\mathcal{S} = \tuple{\Sigma_1,\Sigma_2,\ldots,\Sigma_k}$ of $\Sigma$ respecting $\lapr$ and $\napr$, such that for all $r_1\in\Sigma_i$ and $r_2 \in \Sigma_j$, (1) $r_1 \lapr r_2$ implies $i\leq j$, and (2) $r_1\napr r_2$ implies $i\neq j$.
  Note, $r_1 \negrely r_2$ implies $r_1 \lapr r_2$ and $r_1 \napr r_2$, implementing the requirement that $r_1$ must occur in a strictly earlier stratum.
  Throughout the rest of this section we call a core-safe quasi stratification just a stratification.
  We may further assume a rule set $\Sigma$ to be given as a structure $\tuple{\Sigma,\lapr,\napr}$.

  \begin{definition}\label{def:standardForm}
    For rule set $\Sigma$, a stratification $\mathcal S = \tuple{\Sigma_1,\ldots,\Sigma_k}$ is in \emph{standard form} if for all $i\in\{ 1, \ldots, k \}$ and rules $r_1, r_2\in\Sigma_i$, $r_1\lapr r_2$.
  \end{definition}

  \begin{proposition}
    If any stratification of a rule set $\Sigma$ exists, then there is one in standard form.
  \end{proposition}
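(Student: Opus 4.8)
The plan is to convert an arbitrary stratification $\mathcal{S} = \tuple{\Sigma_1,\ldots,\Sigma_k}$ into a standard-form one by \emph{splitting} each stratum into the equivalence classes of the preorder $\lapr$, listed in a compatible order. Since $\lapr$ is reflexive and transitive, the relation ``$r\lapr r'$ and $r'\lapr r$'' is an equivalence on each $\Sigma_i$, and $\lapr$ descends to a partial order on the quotient; fixing a linear extension of this partial order lets me write $\Sigma_i = C_{i,1}\uplus\cdots\uplus C_{i,m_i}$ with the blocks in that order. I then take $\mathcal{S}'$ to be the concatenation $\tuple{C_{1,1},\ldots,C_{1,m_1},C_{2,1},\ldots,C_{k,m_k}}$. (The strata of a stratification are automatically pairwise disjoint: a rule in $\Sigma_i\cap\Sigma_j$ would, via $r\lapr r$ and condition (1) applied both ways, force $i=j$; so each rule has a unique stratum.)

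I would then verify that $\mathcal{S}'$ is again a stratification. It still covers $\Sigma$, since each stratum is merely partitioned. For condition (1), let $r_1\lapr r_2$: if both rules lie in the same $\Sigma_i$, their $\lapr$-classes satisfy $[r_1]\preceq[r_2]$, so $[r_1]$ is not listed after $[r_2]$; if they lie in distinct strata $\Sigma_i,\Sigma_j$, then condition (1) for $\mathcal{S}$ gives $i\leq j$, hence $i<j$, and every block of $\Sigma_i$ precedes every block of $\Sigma_j$ in $\mathcal{S}'$. In either case the $\mathcal{S}'$-index of $r_1$ does not exceed that of $r_2$. For condition (2), note that condition (2) for $\mathcal{S}$ already forbids a $\napr$-pair inside any single $\Sigma_i$, so a $\napr$-pair in $\mathcal{S}'$ can only span blocks of distinct original strata; there $\napr\subseteq\lapr$ together with the conditions for $\mathcal{S}$ yields the strict inequality $i<j$, hence strictly increasing $\mathcal{S}'$-indices. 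Core-safety transfers as well: $\restrain$ is a binary relation on rules that does not depend on the ambient rule set, so $R_{C_{i,\ell}}\subseteq R_{\Sigma_i}$, whence every position core-safe \wrt $\Sigma_i$ remains core-safe \wrt $C_{i,\ell}$ and core-safety of the rules of $\Sigma_i$ is inherited by $C_{i,\ell}$. Finally $\mathcal{S}'$ is in standard form essentially by construction: two rules sharing a block lie in one $\lapr$-class and are, in particular, related by $\lapr$.

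The one point that needs attention — a matter of care rather than difficulty — is the choice of splitting granularity. One cannot refine every stratum into singletons, because a stratum may contain a nontrivial $\lapr$-cycle (as in Examples~\ref{ex_normal_rels} and~\ref{ex_strats}) whose members must remain in a common stratum; the correct unit is therefore precisely the $\lapr$-equivalence class. Given that, the verification above rests only on two easy structural facts: $\napr$ never occurs within a stratum of a stratification, and $R_{(\cdot)}$ is monotone under removing rules.
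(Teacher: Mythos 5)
Your proof is correct and follows essentially the same route as the paper: both take the $\lapr$-equivalence classes (\ie the strongly connected components of $\tuple{\Sigma,\lapr}$) as the strata of the standard form and list them in an order compatible with the induced partial order. The paper constructs this directly via a topological sort of the SCCs, using the given stratification only to guarantee that each class is $\napr$-free, whereas you obtain the same decomposition by refining the given stratification stratum by stratum; the two constructions coincide because, as you note, every $\lapr$-class is contained in a single stratum.
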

  \begin{proof}
    If we view $\tuple{\Sigma,\lapr}$ as a graph structure, we can identify its strongly connected components, which we subsequently call $\lapr$-SCCs.
    Take all $\lapr$-SCCs as the components of a standard form stratification.
    A stratification exists if every $\lapr$-SCC $\Sigma_i$ is $\napr$-free, which means that there are no rules $r_1, r_2\in \Sigma_i$ with $r_1\napr r_2$~\cite{Alice}.
    The set of all $\lapr$-SCCs of $\Sigma$ forms a partially ordered set (by $\lapr$), from which one topological ordering can be chosen as the required stratification in standard form.
  \end{proof}

  \begin{definition}\label{def:splitAndMerge}
    For rule set $\Sigma$ and stratification $\mathcal S = \tuple{\Sigma_1,\ldots,\Sigma_{i-1},\Sigma_i,\Sigma_{i+1},\ldots,\Sigma_k}$, define
    \begin{align*}
      \splitt[i]{\mathcal{S}, \Sigma^l, \Sigma^r} & := \tuple{\Sigma_1, \ldots, \Sigma_{i-1}, \Sigma^l, \Sigma^r, \Sigma_{i+1}, \ldots, \Sigma_k}
    \end{align*}
    for non-empty subsets $\Sigma^l, \Sigma^r$ of $\Sigma_i$ with $\Sigma_i = \Sigma^l \cup \Sigma^r$, and
    \begin{align*}
      \mergg[i]{\mathcal S} & := \tuple{\Sigma_1,\ldots,\Sigma_{i-1},\Sigma_i\cup\Sigma_{i+1},\ldots,\Sigma_k}\text.
    \end{align*}
    A split/merge is \emph{valid} if the result is a stratification.
  \end{definition}
  Split and merge operations can be concatenated in an operation sequence $\sigma = o_1 ; \ldots ; o_n$ ($n\in\mathbb{N}$).
  The operation of $\sigma$ applied to a stratification $\mathcal{S}$ works as follows:
  First, $o_1$ is applied to $\mathcal{S}$ yielding a stratification $\mathcal{S}_1$.
  Operation $o_i$ for $i>1$ is applied to the result of the previous step (\ie $\mathcal{S}_{i-1}$).
  Note, $\mergg[i]{\splitt[i]{\mathcal S, \Sigma^l, \Sigma^r}} = \mathcal{S}$ and $\splitt[i]{\mergg[i]{\mathcal S}, \Sigma_i, \Sigma_{i+1}} = \mathcal{S}$.
  \begin{proposition}\label{prop:standardFormSplitting}
    For every stratification $\mathcal S$ of rule set $\Sigma$, a stratification $\mathcal{S}'$ in standard form can be reached by a sequence of valid split operations.
  \end{proposition}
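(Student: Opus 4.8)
The plan is to argue via a descending well‑founded measure on stratifications. First I would record the correct reading of "standard form": by definition $\mathcal{S}=\tuple{\Sigma_1,\dots,\Sigma_k}$ is in standard form iff for every $i$ and all $r_1,r_2\in\Sigma_i$ we have $r_1\lapr r_2$; applying this to both orderings of a pair, this says precisely that any two rules of $\Sigma_i$ are mutually $\lapr$‑reachable, i.e.\ that $\Sigma_i$ is contained in a single $\lapr$‑SCC. Accordingly I set $\mu(\mathcal{S}):=\sum_{i=1}^{k}(c_i-1)$, where $c_i$ is the number of $\lapr$‑SCCs meeting $\Sigma_i$. Then $\mu(\mathcal{S})\geq 0$, and $\mu(\mathcal{S})=0$ exactly when $\mathcal{S}$ is in standard form, so it suffices to show that whenever $\mu(\mathcal{S})>0$ there is a valid split strictly decreasing $\mu$.

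So suppose $\mu(\mathcal{S})>0$ and fix a stratum $\Sigma_i$ with $c_i\geq 2$. Among the $\lapr$‑SCCs that meet $\Sigma_i$, choose one, $C$, that is $\lapr$‑maximal (it exists since the $\lapr$‑SCCs of $\Sigma$ carry a partial order). Split $\Sigma_i$ as $\Sigma^l:=\Sigma_i\setminus C$ and $\Sigma^r:=\Sigma_i\cap C$; both are nonempty because $c_i\geq 2$. The crux is verifying that $\splitt[i]{\mathcal{S},\Sigma^l,\Sigma^r}$ is again a stratification. Since $\Sigma^l,\Sigma^r\subseteq\Sigma_i$, both inherit $\napr$‑freeness from the stratum $\Sigma_i$, and as they occupy distinct consecutive positions all $\napr$‑constraints (within each part and between the two parts) hold; all relations between $\Sigma^l$ or $\Sigma^r$ and the untouched strata are likewise inherited from those of $\Sigma_i$ and respected because indices only shift forward. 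The one genuinely new requirement is that no $\lapr$‑edge runs from $\Sigma^r$ (the later slot) back to $\Sigma^l$ (the earlier slot): if $b\in\Sigma^r$, $a\in\Sigma^l$ and $b\lapr a$, then the $\lapr$‑SCC of $a$ lies $\lapr$‑above $C$; it cannot be $C$ itself (else $a\in\Sigma_i\cap C=\Sigma^r$), and it meets $\Sigma_i$, contradicting the maximality of $C$. Hence the split is valid.

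Finally I would check that this split lowers $\mu$ by exactly $1$: the $\lapr$‑SCCs are unchanged; $\Sigma^r\subseteq C$ meets exactly one SCC, contributing $0$; and $\Sigma^l$ meets precisely the SCCs that met $\Sigma_i$ other than $C$, contributing $c_i-2$; all other strata are untouched. Iterating this step, after $\mu(\mathcal{S})$ valid splits we reach a stratification with $\mu=0$, which is in standard form, proving the claim. The main obstacle is the validity verification for the split — and within it the "no backward $\lapr$‑edge" condition, which is exactly what dictates the choice of a $\lapr$‑maximal SCC; the $\napr$ bookkeeping and the computation of how the SCC‑counts change are routine once that choice is made.
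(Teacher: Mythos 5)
Your proof is correct and follows essentially the same route as the paper's: both split non-standard strata along $\lapr$-SCC boundaries in an order compatible with $\lapr$, noting that $\napr$ cannot be violated by splitting a stratum. Your version supplies the rigor the paper's sketch omits --- an explicit termination measure and, crucially, the choice of a $\lapr$-maximal SCC for the later slot, which is exactly what makes the ``no backward $\lapr$-edge'' check succeed.
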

  \begin{proof}
    Let $S$ be a $\lapr$-SCC of $\Sigma$.
    Note, there must be a stratum $\Sigma_i$ of $\mathcal{S}$ with $S\subseteq \Sigma_i$.
    This holds for every $\lapr$-SCCs of $\Sigma$.
    Thus, it is sufficient to iterate over all strata and extract their SCCs via splitting, while maintaining the order $\lapr$.
    $\napr$ can never be violated after splitting a stratum.
  \end{proof}

  \begin{proposition}\label{prop:standardFormConversion}
    Let $\mathcal{S}_1$ and $\mathcal{S}_2$ be two standard form stratifications of $\Sigma$.
    Then there is a sequence of valid split/merge operations on $\mathcal{S}_1$ to yield $\mathcal{S}_2$.
  \end{proposition}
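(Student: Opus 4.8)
The plan is to reduce the statement to a fact about linear extensions of a finite partial order. First I would observe that a standard-form stratification is canonical up to the ordering of its strata: by Definition~\ref{def:standardForm}, any two rules in a common stratum $\Sigma_i$ satisfy $r_1\lapr r_2$ and $r_2\lapr r_1$, hence lie in the same strongly connected component of $\tuple{\Sigma,\lapr}$ (a \emph{$\lapr$-SCC}); conversely, if $r_1\lapr r_2$ and $r_2\lapr r_1$ with $r_1\in\Sigma_i$, $r_2\in\Sigma_j$, then $i\le j$ and $j\le i$, so $i=j$. Thus the strata of any standard-form stratification are \emph{exactly} the $\lapr$-SCCs of $\Sigma$, each occurring once. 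In particular $\mathcal{S}_1$ and $\mathcal{S}_2$ list the same strata, and each is a linear ordering of the $\lapr$-SCCs respecting the partial order $\sqsubseteq$ induced on SCCs by $\lapr$; here $\napr$ imposes nothing new, since $r_1\negrely r_2$ entails $r_1\lapr r_2$, so every $\napr$-edge between rules already forces their (necessarily distinct) SCCs into $\sqsubseteq$.

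Next I would transform $\mathcal{S}_1$ into $\mathcal{S}_2$ by a bubble-sort argument in which each adjacent transposition of two $\sqsubseteq$-incomparable SCCs is implemented by a merge immediately followed by a split. Write $\mathcal{S}_1=\tuple{A_1,\ldots,A_m}$ (the $A_i$ being the $\lapr$-SCCs) and let $A_c$ be the SCC appearing first in $\mathcal{S}_2$. Then $A_c$ is $\sqsubseteq$-minimal: if some SCC $C\ne A_c$ had $C\sqsubseteq A_c$, then $\mathcal{S}_2$ would be forced to place $C$ in a stratum no later than that of $A_c$, i.e.\ in the first stratum, forcing $C=A_c$, a contradiction. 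Consequently $A_{c-1}\not\sqsubseteq A_c$ (minimality) and $A_c\not\sqsubseteq A_{c-1}$ (validity of $\mathcal{S}_1$, since $A_{c-1}$ precedes $A_c$ there), so $A_{c-1}$ and $A_c$ are $\sqsubseteq$-incomparable and hence joined by no $\lapr$- and therefore no $\napr$-edge. I claim that $\splitt[c-1]{\mergg[c-1]{\mathcal{S}_1},A_c,A_{c-1}}$ is a sequence of two valid operations whose net effect swaps $A_{c-1}$ and $A_c$: the merged stratum $A_{c-1}\cup A_c$ contains no internal $\napr$-edge (none lies within $A_{c-1}$ or within $A_c$ — each was a single stratum of $\mathcal{S}_1$ — and none lies between them), so the merge is valid; and the subsequent split is valid because neither of $A_c$, $A_{c-1}$ carries a $\lapr$- or $\napr$-edge to the other, so placing them in either order respects all constraints of Definition~\ref{def:splitAndMerge}. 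Iterating, $A_c$ is moved to the front without disturbing the relative order of the remaining SCCs, and every intermediate list is again a standard-form stratification.

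Once $A_c$ has reached the front, $\mathcal{S}_1$ has the form $\tuple{A_c,\ldots}$ and its tail is a standard-form stratification of $\Sigma\setminus A_c$ whose target — the tail of $\mathcal{S}_2$ — is again a linear extension of the order induced on the remaining SCCs (deleting the minimal SCC $A_c$ from a linear extension leaves one). The argument then recurses on the tail, terminating after at most $\binom{m}{2}$ swaps, at which point the list equals $\mathcal{S}_2$. The routine parts are the SCC bookkeeping and the verification of the validity conditions at each step; the main obstacle — and the one place where it is essential that $\mathcal{S}_1$ and $\mathcal{S}_2$ extend the \emph{same} induced order and that $\napr$ lies below $\lapr$ at the SCC level — is the minimality observation that guarantees the SCC currently destined for the front can always be swapped one position to the left without ever violating $\lapr$ or $\napr$.
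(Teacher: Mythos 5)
Your proof is correct and follows essentially the same approach as the paper, whose own argument is just the two-sentence sketch ``merge independent strata and split them in reverse order''---precisely the adjacent transposition of $\sqsubseteq$-incomparable SCCs via a merge followed by a split that you carry out. Your write-up supplies the details the paper omits (strata of a standard form stratification are exactly the $\lapr$-SCCs, the minimality argument for the SCC heading $\mathcal{S}_2$, and the validity of each merge/split), and these all check out.
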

  \begin{proof}
    Merge independent strata and split them in reverse order.
    Since both standard form stratifications obey $\lapr$ and $\napr$ on $\Sigma$, this procedure terminates successfully.
  \end{proof}

  \begin{lemma}\label{lemma:stratConversion}
    Let $\mathcal{S}_1$ and $\mathcal{S}_2$ be two stratifications of rule set $\Sigma$.
    Then we can transform $\mathcal{S}_1$ into $\mathcal{S}_2$ by sequences of valid split/merge operations.
  \end{lemma}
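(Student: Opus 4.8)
The plan is to reduce both stratifications to a common shape and then reverse one of the reductions. First I would invoke Proposition~\ref{prop:standardFormSplitting} to obtain a sequence $\sigma_1$ of valid split operations that transforms $\mathcal{S}_1$ into a standard-form stratification $\mathcal{S}_1^{\mathrm{std}}$, and, symmetrically, a sequence $\sigma_2$ of valid split operations that transforms $\mathcal{S}_2$ into a standard-form stratification $\mathcal{S}_2^{\mathrm{std}}$. Next, Proposition~\ref{prop:standardFormConversion} gives a sequence $\tau$ of valid split/merge operations taking $\mathcal{S}_1^{\mathrm{std}}$ to $\mathcal{S}_2^{\mathrm{std}}$. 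So after $\sigma_1$ followed by $\tau$ we have reached $\mathcal{S}_2^{\mathrm{std}}$ from $\mathcal{S}_1$.

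It then remains to get from $\mathcal{S}_2^{\mathrm{std}}$ back to $\mathcal{S}_2$, and for this I would run $\sigma_2$ ``in reverse''. Using the identity $\mergg[i]{\splitt[i]{\mathcal{S}, \Sigma^l, \Sigma^r}} = \mathcal{S}$, every split performed by $\sigma_2$ is undone by a matching merge at the same index, so the reversed operation sequence $\sigma_2^{-1}$ consists of merges only, and applying $\sigma_2^{-1}$ to $\mathcal{S}_2^{\mathrm{std}}$ retraces, in the opposite order, exactly the intermediate stratifications that $\sigma_2$ passed through, terminating at $\mathcal{S}_2$. Concatenating the three pieces, $\sigma_1 ; \tau ; \sigma_2^{-1}$ is a sequence of valid split/merge operations transforming $\mathcal{S}_1$ into $\mathcal{S}_2$, which is the claim.

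The step I expect to need the most care is the claim that $\sigma_2^{-1}$ is a \emph{valid} merge sequence, i.e.\ that every intermediate result it produces is again a stratification. This holds because those intermediate results are precisely the stratifications encountered while applying the valid split sequence $\sigma_2$ to $\mathcal{S}_2$ (starting from $\mathcal{S}_2$, which is a stratification by hypothesis), merely visited in reverse; hence no fresh verification of the stratification conditions is required beyond the already-noted fact that split and merge at a common index are mutually inverse. Everything else in the argument is a direct appeal to the three propositions established above.
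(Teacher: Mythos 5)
Your proposal is correct and follows essentially the same route as the paper's proof: split both stratifications into standard form via Proposition~\ref{prop:standardFormSplitting}, convert between the standard forms via Proposition~\ref{prop:standardFormConversion}, and append the reversed (merge-only) dual of the second split sequence. Your added justification that the reversed sequence is valid---because its intermediate results are exactly the stratifications visited by the forward split sequence---is a point the paper leaves implicit, so no gap remains.
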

  \begin{proof}
    By Proposition~\ref{prop:standardFormSplitting}, we obtain standard form stratifications $\mathcal{S}_1'$ and $\mathcal{S}_2'$ just by valid sequences of split operations $\sigma_1$ and $\sigma_2$.
    By Proposition~\ref{prop:standardFormConversion}, there is a valid operation sequence $\sigma'$, transforming $\mathcal{S}_1'$ into $\mathcal{S}_2'$.
    Thus, we obtain a sequence $\sigma$ of valid operations by $\sigma_1 ; \sigma' ; \sigma_2^{-1}$,
    where $\sigma_2^{-1}$ is the dual of $\sigma_2$, replacing every $\splitt[i]{\ldots}$ by $\mergg[i]{\ldots}$ and reversing the order of the operation sequence.
  \end{proof}
  Hence, the set of all stratifications forms an equivalence class under valid transformations.
  Next, we show that the result of a single stratum is unique (up to isomorphisms).
  Therefore, we first show that the chase on a single stratum produces a unique model up to homomorphisms, from which the claim directly follows (homomorphically equivalent models have the same core up to isomorphisms).
  The following claim generalizes Proposition~\ref{prop_normal_std_chase} in that we assume an input instance $\Inst$ (with or without nulls) that is core, instead of an input database \Dnst.
  \begin{lemma}\label{lemma:stratumUniqueness}
    Let $\Sigma$ be a set of normal rules such that (1) all rules are core-safe \wrt $\Sigma$ and (2) $\Sigma$ is $\napr$-free. %
    Then every restricted chase sequence over $\Sigma$ and any core instance \Inst is generating and a yields a model of $\Sigma$ and $\Dnst$.
    All of these models are equivalent up to homomorphisms.
  \end{lemma}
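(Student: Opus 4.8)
The plan is to reproduce the three-part argument used for Proposition~\ref{prop_normal_std_chase}, lifting it from a ground database $\Dnst$ to an arbitrary core instance $\Inst$. That every restricted chase sequence over $\Sigma$ and $\Inst$ yields a model is again immediate from fairness (condition~3 of Definition~\ref{def:restricted-chase}): an unsatisfied match surviving into the chase $\Inst[M]$ would contradict it. For the \emph{generating} property, suppose some sequence $S$ is not generating; then there is a first instance of $S$, obtained by applying a rule $r'$ for match $h'$, at which a match $h$ used earlier to apply a rule $r$ ceases to be generating for $r$. Since $h$ was still generating for $r$ one step before, $h'$ must have derived an atom lying in $h(\bodyn{r})$, which is exactly a witness for $r'\negrely r$, contradicting assumption~(2). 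Neither argument uses anything about the input beyond its being a set of atoms, so both carry over verbatim.

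The substance of the lemma is homomorphic equivalence. Fix two restricted chases $\Inst[M]_1$ and $\Inst[M]_2$ of $\Sigma$ and $\Inst$ with chase sequences $\Inst[M]_1^0, \Inst[M]_1^1, \ldots$ and $\Inst[M]_2^0, \Inst[M]_2^1, \ldots$, where $\Inst[M]_1^0 = \Inst[M]_2^0 = \Inst$. I would construct a homomorphism $\Inst[M]_1 \to \Inst[M]_2$ as the union of homomorphisms $h_i : \Inst[M]_1^i \to \Inst[M]_2$, built by induction on $i$; the converse $\Inst[M]_2 \rightleftarrows \Inst[M]_1$ follows by a symmetric argument. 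For $i = 0$ the inclusion $\Inst \hookrightarrow \Inst[M]_2$ serves. For the step, when $\Inst[M]_1^i$ arises from $\Inst[M]_1^{i-1}$ by applying a normal rule $r$ for a match $h$, the composite $h' := h_{i-1}\circ h$ is a homomorphism $\bodyp{r}\to\Inst[M]_2$; if $h'$ is in fact a match for $r$ in $\Inst[M]_2$, i.e.\ $h'(\bodyn{r})\cap\Inst[M]_2 = \emptyset$, then it extends to the head, and composing $h_{i-1}$ with the assignment of the fresh head nulls yields $h_i$. So everything reduces to checking that $h'$ respects the negated atoms.

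Suppose it does not: some $p(\vec x)\in\bodyn{r}$ has $p(h'(\vec x))\in\Inst[M]_2$. As in the proof of Proposition~\ref{prop_normal_std_chase}, $\napr$-freeness forces this atom to lie already in the input $\Inst$, since no rule of $\Sigma$ can derive an atom matching a negated body atom without inducing a negative reliance. Now core-safety of $r$ puts every variable of $\vec x$ at a core-safe position of $\bodyp{r}$, so each term of $h(\vec x)$ occurs at a core-safe position of $\Inst[M]_1$; by the instance-version of Lemma~\ref{lemma:coreSafesAreCore} these terms all lie in the core instance of $\Inst[M]_1$. Since $\Inst$ is itself a core, it coincides---up to the canonical isomorphism supplied by Lemma~\ref{lemma_}---with the core instance of $\Inst[M]_1$, and the copies of $\Inst$ inside $\Inst[M]_1$ and $\Inst[M]_2$ are matched by the homomorphisms $\Inst[M]_1 \rightleftarrows \Inst[M]_2$; pulling $p(h'(\vec x))\in\Inst$ back then gives $p(h(\vec x))\in\Inst\subseteq\Inst[M]_1^{i-1}$, contradicting that $h$ was a (generating) match for $r$.

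The main obstacle I anticipate is precisely this appeal to an instance-version of Lemma~\ref{lemma:coreSafesAreCore} (and, behind it, Lemma~\ref{lemma_} on core instances): both are stated for chases of a database and must be re-established for chases over a core input $\Inst$. The inductive step of Lemma~\ref{lemma:coreSafesAreCore} uses only core-safety of positions and the notion of alternative match and so is unaffected; its base case, however, used that $\Dnst^0=\Dnst$ consists of constants, which must be replaced by the observation that $\Inst$, being a core, equals its own core and therefore sits inside the core instance of any universal model built over it, so that every term of $\Inst$ already occurs there. Verifying that ``core instance of $\Inst[M]$'' remains well-defined for inputs with nulls---which rests on the standing assumption that a finite core model exists, here applied to $\Sigma$ and $\Inst$---is the one point needing care; the remainder is a transcription of the database case.
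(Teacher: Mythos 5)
Your overall plan matches the paper's: the proof is a transcription of Proposition~\ref{prop_normal_std_chase} in which the nulls already present in the input $\Inst$ must be distinguished from the nulls freshly generated during the chase, and your treatment of the model property, the generating property, and the inductive construction of $h_i:\Inst[M]_1^i\to\Inst[M]_2$ is exactly right. The gap is in the crux of the homomorphic-equivalence step, the case where some $p(\vec x)\in\bodyn{r}$ has $p(h_{i-1}(h(\vec x)))\in\Inst$ while $h(\vec x)$ contains a freshly generated null. Your claim that $\Inst$, being a core, ``coincides with the core instance of $\Inst[M]_1$'' is false: the core instance of $\Inst[M]_1$ is the embedded copy of the core \emph{model} of $\Sigma$ and $\Inst$, which satisfies the rules and is therefore in general a proper superset of $\Inst$, containing derived atoms and possibly fresh nulls of its own (already $\Inst=\{p(A)\}$ with the rule $p(x)\to\exists y.\,q(x,y)$ shows this). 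Consequently, knowing via (an instance version of) Lemma~\ref{lemma:coreSafesAreCore} that the terms of $h(\vec x)$ lie in the core instance does not place them in $\Inst$, and the ``pull back'' of the atom is unjustified: membership of the \emph{terms} in the core instance says nothing about which \emph{atoms} over those terms are present, and $h_{i-1}$ is only a homomorphism, not known to restrict to an isomorphism on any substructure, so $p(h_{i-1}(h(\vec x)))\in\Inst$ does not yield $p(h(\vec x))\in\Inst[M]_1^{i-1}$.

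The paper closes this case without routing through the core instance at all: if $h(\vec x)$ contains a fresh null $n$ that $h_{i-1}$ maps to a pre-existing term, then $h_{i-1}$ witnesses an alternative match for the rule application that created $n$, so that rule is restrained and the existential variable generating $n$ lies in $R_\Sigma$; hence $n$ occurs only at $R_\Sigma$-influenced positions, contradicting core-safety of $r$, which forces every variable of $\bodyn{r}$ (and thus every term of $h(\vec x)$) onto a core-safe position of $\bodyp{r}$. That restraint argument is what should replace your pull-back. The remaining case, where $h(\vec x)$ consists only of input terms, is handled as you do: then $h_{i-1}(h(\vec x))=h(\vec x)$, so $p(h(\vec x))\in\Inst\subseteq\Inst[M]_1^{i-1}$ contradicts $h$ being a match for the normal rule $r$.
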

  \begin{proof}
    The proof is almost the same as for Proposition~\ref{prop_normal_std_chase}, except that in the part where we show that every two chases are homomorphically equivalent, we need to distinguish fresh nulls (\ie the ones that have been produced throughout the chase) from those nulls already present in $\Inst$.
    Otherwise, the exact same argumentation works here as well.
  \end{proof}

  \begin{corollary}\label{cor:uniqueCore}
    Let $\Sigma$ be a set of normal rules adhering to the restriction of Lemma~\ref{lemma:stratumUniqueness} and \Inst a core instance.
    Then cores are unique for all chases over $\Sigma$ and $\Inst$.
  \end{corollary}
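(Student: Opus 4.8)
The plan is to read this off Lemma~\ref{lemma:stratumUniqueness} together with the standard fact that homomorphically equivalent finite interpretations have isomorphic cores (the same fact already invoked earlier in this section when passing from ``homomorphically equivalent models'' to ``the same core up to isomorphisms'').

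First I would apply Lemma~\ref{lemma:stratumUniqueness}: since $\Sigma$ meets conditions~(1) and~(2) of that lemma and $\Inst$ is a core instance, every restricted chase sequence over $\Sigma$ and $\Inst$ is generating and yields a model of $\Sigma$ and $\Inst$, and any two chase results $\Inst[M]_1, \Inst[M]_2$ obtained in this way are homomorphically equivalent. Next I would recall that for finite interpretations homomorphic equivalence implies isomorphism of the cores: composing a homomorphism $\Inst[M]_1 \to \Inst[M]_2$ with the retraction of $\Inst[M]_2$ onto its core and with the inclusion of the core of $\Inst[M]_1$ into $\Inst[M]_1$ (and symmetrically) yields homomorphisms between the two cores whose round-trip compositions are endomorphisms of cores, hence isomorphisms by the definition of core; so the cores of $\Inst[M]_1$ and $\Inst[M]_2$ are isomorphic. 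Applying this to an arbitrary pair of chases over $\Sigma$ and $\Inst$ gives that all of them have one and the same core up to isomorphism, which is the claim.

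The statement is thus a corollary in the literal sense, and the only point that requires a word of care is finiteness: homomorphic equivalence does not force isomorphic cores for infinite structures (as noted in the preliminaries), so the argument is carried out under the paper's standing assumption that the relevant cores are finite --- exactly the proviso already made in Definition~\ref{def:stratified-semi-core-chase}. One should also check that the homomorphism notion underlying ``homomorphically equivalent'' in Lemma~\ref{lemma:stratumUniqueness} --- which fixes constants but may permute the nulls already present in $\Inst$ --- is precisely the one used when defining cores, so that the core-uniqueness step applies verbatim even though $\Inst$ need not be null-free. With that in place, nothing further is needed.
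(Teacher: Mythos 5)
Your proposal is correct and matches the paper's intended argument exactly: the paper derives this corollary from Lemma~\ref{lemma:stratumUniqueness} via precisely the observation, stated just before that lemma, that homomorphically equivalent finite models have isomorphic cores. Your spelled-out retraction/composition argument and the finiteness caveat are just the standard justification of that fact, so nothing differs in substance.
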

  Next we show a general result about valid splitting and merging:
  The result of a split/merge operation on a rule set yields the same core.
  As a notational convention in this proof we abbreviate the fact that an instance \Jnst was obtained by applying rule $r$ for match $h$ on \Inst by $\Inst \xrightarrow{r,h} \Jnst$.
  \begin{lemma}\label{lemma:splitMergeCorrectness}
    Let $\Sigma$ be a set of rules, so that $\mathcal{S}_0 = \tuple{\Sigma}$ is a core-safe quasi stratification. %
    Furthermore, let $\mathcal S = \tuple{\Sigma_1, \Sigma_2}$ be a valid split of $\tuple{\Sigma}$.
    For every core instance $\Inst$, we have $\mathcal{S}_0(\Inst)$ is isomorphic to $\mathcal S(\Inst)$.
  \end{lemma}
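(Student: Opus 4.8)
The plan is to reduce the claim to a homomorphic equivalence between two chases of $\Sigma$ over $\Inst$ and then invoke the uniqueness of finite cores; throughout, $\Inst$ is a core instance for which (as required by Definition~\ref{def:stratified-semi-core-chase}) all occurring cores are finite. Since $\mathcal S_0 = \tuple{\Sigma}$ is a core-safe quasi stratification, $\Sigma$ is $\napr$-free and all its rules are core-safe \wrt $\Sigma$, so Lemma~\ref{lemma:stratumUniqueness} and Corollary~\ref{cor:uniqueCore} apply to $\Sigma$ and $\Inst$; hence $\mathcal S_0(\Inst) = \core{\Inst[M]}$ for \emph{any} restricted chase $\Inst[M]$ of $\Sigma$ over $\Inst$. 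Similarly $\Sigma_1$ and $\Sigma_2$ are $\napr$-free and core-safe for themselves (validity of the split), so, writing $\Inst[N]_1$ for a restricted chase of $\Sigma_1$ over $\Inst$, $\Inst[C]^1 = \core{\Inst[N]_1}$, and $\Inst[N]_2$ for a restricted chase of $\Sigma_2$ over $\Inst[C]^1$, we have $\mathcal S(\Inst) = \core{\Inst[N]_2}$. It thus suffices to produce a restricted chase of $\Sigma$ over $\Inst$ homomorphically equivalent to $\Inst[N]_2$.

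First I would reorganise the chase of $\Sigma$ stratum by stratum. Because the split $\mathcal S = \tuple{\Sigma_1,\Sigma_2}$ is valid, $r_1 \lapr r_2$ forces $i\leq j$ and $r_1 \napr r_2$ forces $i\neq j$; in particular there is neither a positive reliance nor a negative reliance from a rule of $\Sigma_2$ to a rule of $\Sigma_1$, so applying a $\Sigma_2$-rule can neither create a new match of a $\Sigma_1$-rule nor turn a generating match of a $\Sigma_1$-rule into a non-generating one. Consequently a fair chase of $\Sigma$ over $\Inst$ can be run so as to first saturate $\Sigma_1$ (yielding some $\Inst[N]_1$) and only afterwards apply $\Sigma_2$; its outcome $\Inst[N]_1'$ is at once a restricted chase of $\Sigma$ over $\Inst$ and a restricted chase of $\Sigma_2$ over $\Inst[N]_1$. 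By Corollary~\ref{cor:uniqueCore}, $\core{\Inst[M]} \cong \core{\Inst[N]_1'}$.

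Second I would compare $\Inst[N]_1'$ (chasing $\Sigma_2$ over $\Inst[N]_1$) with $\Inst[N]_2$ (chasing $\Sigma_2$ over the core $\Inst[C]^1 = \core{\Inst[N]_1}$). As $\Inst[N]_1$ and $\Inst[C]^1$ are homomorphically equivalent, via the core retraction $e : \Inst[N]_1 \to \Inst[C]^1$ and the inclusion $\Inst[C]^1 \hookrightarrow \Inst[N]_1$, their positive consequences agree up to homomorphism by the usual monotone extension of $e$, respectively the inclusion, along the two chase sequences. The delicate point is that every $\Sigma_2$-rule match used on one side remains \emph{generating} when transported to the other side. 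Here core-safety of the rules of $\Sigma_2$ is decisive: by Definition~\ref{def:core-safety} every variable of $\bodyn{r_2}$ occurs at a core-safe position of $\bodyp{r_2}$, so by the argument underlying Lemma~\ref{lemma:coreSafesAreCore} — carried out over the core instance $\Inst$, exactly as in the proof of Lemma~\ref{lemma:stratumUniqueness} — the images of these variables lie in the core instance, on which $e$ restricts to an isomorphism onto $\Inst[C]^1$; hence a negated atom of $r_2$ is satisfied over $\Inst[N]_1'$ iff it is satisfied over $\Inst[N]_2$, and likewise a generating match over $\Inst[C]^1$ transports to a generating match over $\Inst[N]_1$. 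Therefore $\Inst[N]_1'$ and $\Inst[N]_2$ are homomorphically equivalent, whence (both having finite cores) $\core{\Inst[N]_1'} \cong \core{\Inst[N]_2} = \mathcal S(\Inst)$. Combining with the first step, $\mathcal S_0(\Inst) = \core{\Inst[M]} \cong \core{\Inst[N]_1'} \cong \mathcal S(\Inst)$.

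I expect the second step to be the main obstacle. The reordering in the first step is routine once the reliance conditions of a valid split are unpacked, and the positive part of the second step is plain monotonicity; what needs real work is to verify that replacing $\Inst[N]_1$ by its core neither destroys nor introduces any $\Sigma_2$-rule match — equivalently, that the negated body atoms of the second stratum can refer only to elements that the core retraction fixes up to isomorphism. This is precisely what the core-safety inductions provide (Lemma~\ref{lemma:coreSafesAreCore}, together with the endgame of the proof of Proposition~\ref{prop_normal_std_chase}), here relativised from a database to a core instance.
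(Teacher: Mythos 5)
Your proposal is correct and follows essentially the same route as the paper's proof: both reorganise the $\Sigma$-chase stratum by stratum using the reliance conditions of the valid split (the paper's ``simulation'' step), and both reduce the crux to showing that chasing $\Sigma_2$ over the full stratum-1 result versus over its core $\Inst[C]^1$ yields results with isomorphic cores, using core-safety of negated body atoms to transport generating matches (the paper's ``stuttering'' retraction onto $\Inst[C]_2$, backed by Lemma~\ref{lemma:stratumUniqueness} and Corollary~\ref{cor:uniqueCore}). The one point to make explicit in your second step is that the negated body variables of $\Sigma_2$-rules must avoid the redundant nulls created by $\Sigma_1$, which requires their core-safety relative to $R_\Sigma$ (covering restraints among $\Sigma_1$-rules) rather than merely $R_{\Sigma_2}$; this is supplied by the hypothesis that $\tuple{\Sigma}$ itself is a core-safe stratification, not by the per-stratum safety of the split alone.
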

  \begin{proof}
    We know already that the chases of $\Sigma$ and $\Inst$ have a unique core by Corollary~\ref{cor:uniqueCore} (due to Lemma~\ref{lemma:stratumUniqueness}).
    We will therefore look at the core-safe chase sequence of $\mathcal{S}$, which is a sequence of the three core models $\Inst[C]_0,\Inst[C]_1,\Inst[C]_3$.
    $\Inst[C]_0 = \Inst$.
    $\Inst[C]_1$ is the core of a restricted chase sequence
    $\Inst^0, \Inst^1 \ldots \Inst^m$
    of $\Sigma_1$ and $\Inst$, where $\Inst^0 = \Inst$, and $\Inst^{i-1} \xrightarrow{r_i, h_i}\Inst^i$ for an unsatisfied match $h_i$ of $r_i\in\Sigma_1$ for $0 < i \geq m$.
    Since $\Inst[C]_1$ is the core of $\Inst_m$, there is a homomorphism $c_1 : \Inst^m \to \Inst[C]_1$.
    Second, we get the chase sequence
    $\Jnst^0, \Jnst^1, \ldots, \Jnst^n$
    of $\Sigma_2$ and $\Inst[C]_1$, where $\Jnst^0 = \Inst[C]_1$, and $\Jnst^{j-1} \xrightarrow{r_{m+j},h_{m+j}} \Jnst^j$ for $r_{m+j}\in\Sigma_2$ and unsatisfied match $h_{m+j}$.
    Analogously to $c_1$ there is a homomorphism $c_2 : \Jnst^n \to \Inst[C]_2$.
    \paragraph{Simulation:} %
    \label{par:simulation}
    We can use the chase sequences above as a starting point for a chase sequence $\Inst[K]^0,\Inst[K]^1,\ldots$ of $\Sigma$ and $\Inst$.
    The first $m$ steps use the rule applications of the first chase sequence.
    In particular we have $\Inst[K]^{i-1} \xrightarrow{r_i,h_i} \Inst[K]^i$ ($0<i\leq m$).
    For every $i>0$, $\Inst[K]^i=\Inst^i$ may not hold because the injection of fresh nulls cannot be controlled.
    However, $\Inst[K]^i$ and $\Inst^i$ are surely isomorphic.
    Therefore, we will not distinguish $\Inst[K]$-instances from $\Inst$-instances and go on with the proof as if  
    $\Inst[K]^0,\ldots,\Inst[K]^m = \Inst^0,\ldots,\Inst^m$.
    Note, $\Inst[K]^m$ is a model of $\Sigma_1$ and \Inst (by Lemma~\ref{lemma:stratumUniqueness}) and since $\tuple{\Sigma}$ and $\tuple{\Sigma_1, \Sigma_2}$ are core-safe quasi stratifications, $\Sigma_1$ is satisfied in all further steps of the chase sequence. 
    Since $\Inst[C]_1\subseteq\Inst^m = \Inst[K]^m$, we can continue with the second chase sequence as follows: $\Inst[K]^{m}\Inst[K]^{m+1}\ldots\Inst[K]^{m+n}$ is the sequence where $\Inst[K]^{m+j-1}\xrightarrow{r_{m+j},h_{m+j}}\Inst[K]^{m+j}$ ($0<j\leq n$).
    $\Inst[K]^0\ldots\Inst[K]^m\Inst[K]^{m+1}\ldots\Inst[K]^{m+n}$ is an initial chase sequence of $\Sigma$ and $\Inst$.
    As for the first chase sequence, we will further assume that $\Jnst^j\subseteq \Inst[K]^{m+j}$ ($0\leq j\leq n$), although the injection of nulls in $\Inst[K]$-instances may differ from those in \Jnst-instances (isomorphism argument).
    Since we did not have a core construction step on $\Inst[K]^m$, there may be unsatisfied matches for rules in $\Sigma$ in $\Inst[K]^{m+n}$, which we will show to refer to non-core structures only.
    \paragraph{Stuttering:} %
    \label{par:stuttering_}
    Let $\Inst[K]^{m+n+1},\Inst[K]^{m+n+2},\ldots$ be a continuation of the initial chase sequence $\Inst[K]^0\ldots\Inst[K]^{m+n}$ of $\Sigma$ and $\Inst$.
    Recall that every chase step subsequent to $\Inst[K]^{m+n}$ stems from a rule application $r\in\Sigma_2$.
    By induction on $i\geq 0$, we show that there is a homomorphism $\Inst[K]^{m+n+i}\to\Inst[C]_2$.
    \begin{description}
    \item[Base:] For $i = 0$, we have $\Inst[K]^{m+n+i}=\Inst[K]^{m+n}$ for which we take
      $$c_0(t) := \left\{ \begin{array}{ccl}
                            c_2(c_1(t)) && \text{if $t$ is a term of $\Inst[C]_1$} \\
                            c_2(t) && \text{otherwise.}
                          \end{array}\right.$$
      As $\Inst[K]^{m+n} = \Inst^m \cup \Jnst^n$, $c_0(\Inst[K]^{m+n}) = \Inst[C]_2$.
    \item[Step:] For $\Inst[K]^{i-1}\xrightarrow{\rul,h} \Inst[K]^i$ ($m+n < i$), let $c : \Inst[K]^{j-1}\to\Inst[C]_2$ be the homomorphism by induction hypothesis.
      By $h' = c\circ h$ we obtain a match for $\rul$ in $\Inst[C]_2$ because $c$ is a homomorphism and $\body{\rul}$ is a core-safe BNCQ (\cf proof of Proposition~\ref{prop_normal_std_chase}).
      $h'$ is already satisfied in $\Inst[C]_2$, which means there is an extension $h'^{\star}$ of $h'$ with $h'^{\star}(\head{\rul})\subseteq\Inst[C]_2$.
      Define $c' = c \cup \{ (h^\star(x), h'^\star(x)) \mid x\in\exVarsOf{\rho} \}$ as the homomorphism $\Inst[K]^{m+n+j}\to\Inst[C]_2$.
    \end{description}
    Finally, since finiteness of $\Inst[C]_2$ is enforced by Definition~\ref{def:stratified-semi-core-chase}, there the chase sequence $\Inst[K]^0,\Inst[K]^1,\ldots$ is terminating at some instance $\Inst[K]^{m+n+p}$ for $p\geq 0$, for which we have $\Inst[K]^{m+n+p}\to\Inst[C]_2$, making $\Inst[C]_2$ the core of $\Inst[K]^{m+n+p}$.
  \end{proof}

  \begin{proof}[Proof of Theorem~\ref{thm:perfectCoresAreUnique}]
    Let $\mathcal{S}$ and $\mathcal{S}'$ be two core-safe quasi stratifications.
    $\mathcal{S}$ can be transformed into $\mathcal{S}'$ by only valid split/merge operations (by Lemma~\ref{lemma:stratConversion}).
    Each one of these transformations produces a stratification with an equivalent core-safe chase ()up to isomorphisms) by Lemma~\ref{lemma:splitMergeCorrectness}.
    By transitivity of equivalence up to isomorphisms, $\mathcal{S}(\Dnst)$ and $\mathcal{S}'(\Dnst)$ are isomorphic core models of $\Sigma$ and $\Dnst$.
  \end{proof}

\end{arxiv}


\begin{thebibliography}{23}
\providecommand{\natexlab}[1]{#1}

\bibitem[{Abiteboul, Hull, and Vianu(1994)}]{Alice}
Abiteboul, S.; Hull, R.; and Vianu, V. 1994.
\newblock \emph{Foundations of Databases}.
\newblock Addison Wesley.

\bibitem[{Alviano, Morak, and Pieris(2017)}]{AlvianoMP17:circum-tgds}
Alviano, M.; Morak, M.; and Pieris, A. 2017.
\newblock Stable Model Semantics for Tuple-Generating Dependencies Revisited.
\newblock In  \cite{pods17}, 377--388.

\bibitem[{Baget et~al.(2014)Baget, Garreau, Mugnier, and
  Rocher}]{Baget+14:asp-exists-nmr}
Baget, J.-F.; Garreau, F.; Mugnier, M.-L.; and Rocher, S. 2014.
\newblock Revisiting Chase Termination for Existential Rules and their
  Extension to Nonmonotonic Negation.
\newblock In Konieczny, S.; and Tompits, H., eds., \emph{Proceedings of the
  15th International Workshop on Non-Monotonic Reasoning (NMR 2014)}, volume
  1843-14-01 of \emph{INFSYS Research Report}, 176--184. TU Wien.

\bibitem[{Baget et~al.(2009)Baget, Lecl{\`e}re, Mugnier, and
  Salvat}]{BagetLMS09}
Baget, J.-F.; Lecl{\`e}re, M.; Mugnier, M.-L.; and Salvat, E. 2009.
\newblock Extending Decidable Cases for Rules with Existential Variables.
\newblock In Boutilier, C., ed., \emph{Proceedings of the 21st International
  Joint Conference on Artificial Intelligence (IJCAI'09)}, 677--682. IJCAI.

\bibitem[{Bauslaugh(1995)}]{Bauslaugh95:core-like}
Bauslaugh, B.~L. 1995.
\newblock Core-like properties of infinite graphs and structures.
\newblock \emph{Discrete Math.}, 138(1): 101--111.

\bibitem[{Bellomarini, Sallinger, and Gottlob(2018)}]{BSG:Vadalog18}
Bellomarini, L.; Sallinger, E.; and Gottlob, G. 2018.
\newblock The {Vadalog} System: {Datalog-based} Reasoning for Knowledge Graphs.
\newblock \emph{Proceedings of the VLDB Endowment}, 11(9): 975--987.

\bibitem[{Benedikt et~al.(2017)Benedikt, Konstantinidis, Mecca, Motik, Papotti,
  Santoro, and Tsamoura}]{Benedikt+17:ChaseBench}
Benedikt, M.; Konstantinidis, G.; Mecca, G.; Motik, B.; Papotti, P.; Santoro,
  D.; and Tsamoura, E. 2017.
\newblock Benchmarking the Chase.
\newblock In  \cite{pods17}, 37--52.

\bibitem[{Cal\`{\i}, Gottlob, and Lukasiewicz(2009)}]{podsdissem}
Cal\`{\i}, A.; Gottlob, G.; and Lukasiewicz, T. 2009.
\newblock A general datalog-based framework for tractable query answering over
  ontologies.
\newblock In Paredaens, J.; and Su, J., eds., \emph{Proceedings of the 28th
  Symposium on Principles of Database Systems (PODS'09)}, 77--86. ACM.

\bibitem[{Carral et~al.(2019)Carral, Dragoste, Gonz{\'{a}}lez, Jacobs,
  Kr{\"{o}}tzsch, and Urbani}]{VLog4j2019}
Carral, D.; Dragoste, I.; Gonz{\'{a}}lez, L.; Jacobs, C.; Kr{\"{o}}tzsch, M.;
  and Urbani, J. 2019.
\newblock {VLog}: A Rule Engine for Knowledge Graphs.
\newblock In {Ghidini et al.}, C., ed., \emph{Proceedings of the 18th
  International Semantic Web Conference (ISWC'19, Part {II})}, volume 11779 of
  \emph{LNCS}, 19--35. Springer.

\bibitem[{Carral et~al.(2018)Carral, Kr{\"{o}}tzsch, Marx, Ozaki, and
  Rudolph}]{CKMOR18:cores}
Carral, D.; Kr{\"{o}}tzsch, M.; Marx, M.; Ozaki, A.; and Rudolph, S. 2018.
\newblock Preserving Constraints with the Stable Chase.
\newblock In Kimelfeld, B.; and Amsterdamer, Y., eds., \emph{Proceedings of the
  21st International Conference on Database Theory ({ICDT}'18)}, volume~98 of
  \emph{LIPIcs}, 12:1--12:19. Schloss Dagstuhl -- Leibniz-Zentrum fuer
  Informatik.

\bibitem[{{Cuenca Grau} et~al.(2013){Cuenca Grau}, Horrocks, Kr{\"o}tzsch,
  Kupke, Magka, Motik, and Wang}]{CG+13:acyclicity}
{Cuenca Grau}, B.; Horrocks, I.; Kr{\"o}tzsch, M.; Kupke, C.; Magka, D.; Motik,
  B.; and Wang, Z. 2013.
\newblock Acyclicity Notions for Existential Rules and Their Application to
  Query Answering in Ontologies.
\newblock \emph{Journal of Artificial Intelligence Research}, 47: 741--808.

\bibitem[{Deutsch, Nash, and Remmel(2008)}]{DNR08:corechase}
Deutsch, A.; Nash, A.; and Remmel, J.~B. 2008.
\newblock The Chase Revisited.
\newblock In Lenzerini, M.; and Lembo, D., eds., \emph{Proceedings of the 27th
  Symposium on Principles of Database Systems (PODS'08)}, 149--158. ACM.

\bibitem[{Ebbinghaus, Flum, and Thomas(1994)}]{Ebbinghaus1994}
Ebbinghaus, H.-D.; Flum, J.; and Thomas, W. 1994.
\newblock \emph{Semantics of First-Order Languages}, 27--57.
\newblock New York, NY: Springer New York.
\newblock ISBN 978-1-4757-2355-7.

\bibitem[{Fagin et~al.(2005)Fagin, Kolaitis, Miller, and Popa}]{FaginKMP05}
Fagin, R.; Kolaitis, P.~G.; Miller, R.~J.; and Popa, L. 2005.
\newblock Data exchange: semantics and query answering.
\newblock \emph{Theoretical Computer Science}, 336(1): 89--124.

\bibitem[{Fagin, Kolaitis, and Popa(2005)}]{FaginKP05:core}
Fagin, R.; Kolaitis, P.~G.; and Popa, L. 2005.
\newblock Data exchange: {Getting} to the core.
\newblock \emph{{ACM} Trans. Database Syst.}, 30(1): 174--210.

\bibitem[{Gottlob et~al.(2014)Gottlob, Hernich, Kupke, and
  Lukasiewicz}]{GottlobHKL14:ExistRulesWfs}
Gottlob, G.; Hernich, A.; Kupke, C.; and Lukasiewicz, T. 2014.
\newblock Stable Model Semantics for Guarded Existential Rules and Description
  Logics.
\newblock In Baral, C.; {De Giacomo}, G.; and Eiter, T., eds.,
  \emph{Proceedings of the 14th International Conference on Principles of
  Knowledge Representation and Reasoning (KR'14)}, 258--267. AAAI Press.

\bibitem[{Hell and Ne\v{s}et\v{r}il(1992)}]{HN92:core}
Hell, P.; and Ne\v{s}et\v{r}il, J. 1992.
\newblock The core of a graph.
\newblock \emph{Discrete Math.}, 109: 117--126.

\bibitem[{Kr{\"{o}}tzsch(2020)}]{KR20:cores}
Kr{\"{o}}tzsch, M. 2020.
\newblock Computing Cores for Existential Rules with the Standard Chase and
  {ASP}.
\newblock In Calvanese, D.; Erdem, E.; and Thielscher, M., eds.,
  \emph{Proceedings of the 17th International Conference on Principles of
  Knowledge Representation and Reasoning (KR 2020)}, 603--613. IJCAI.

\bibitem[{Kr{\"o}tzsch and Rudolph(2011)}]{KR11:jointacyc}
Kr{\"o}tzsch, M.; and Rudolph, S. 2011.
\newblock Extending Decidable Existential Rules by Joining Acyclicity and
  Guardedness.
\newblock In Walsh, T., ed., \emph{Proceedings of the 22nd International Joint
  Conference on Artificial Intelligence (IJCAI'11)}, 963--968. AAAI
  Press/IJCAI.

\bibitem[{Magka, Kr{\"o}tzsch, and Horrocks(2013)}]{MKH13:reliances}
Magka, D.; Kr{\"o}tzsch, M.; and Horrocks, I. 2013.
\newblock Computing Stable Models for Nonmonotonic Existential Rules.
\newblock In Rossi, F., ed., \emph{Proceedings of the 23rd International Joint
  Conference on Artificial Intelligence (IJCAI'13)}, 1031--1038. AAAI
  Press/IJCAI.

\bibitem[{Nenov et~al.(2015)Nenov, Piro, Motik, Horrocks, Wu, and
  Banerjee}]{N+15:RDFoxToolPaper}
Nenov, Y.; Piro, R.; Motik, B.; Horrocks, I.; Wu, Z.; and Banerjee, J. 2015.
\newblock {RDFox:} {A} Highly-Scalable {RDF} Store.
\newblock In Arenas, M.; Corcho, {\'{O}}.; Simperl, E.; Strohmaier, M.;
  d'Aquin, M.; Srinivas, K.; Groth, P.~T.; Dumontier, M.; Heflin, J.;
  Thirunarayan, K.; and Staab, S., eds., \emph{Proceedings of the 14th
  International Semantic Web Conference (ISWC'15), Part {II}}, volume 9367 of
  \emph{LNCS}, 3--20. Springer.

\bibitem[{Sallinger, den Bussche, and Geerts(2017)}]{pods17}
Sallinger, E.; den Bussche, J.~V.; and Geerts, F., eds. 2017.
\newblock \emph{Proceedings of the 36th Symposium on Principles of Database
  Systems (PODS'17)}. ACM.

\bibitem[{Urbani et~al.(2018)Urbani, Kr{\"{o}}tzsch, Jacobs, Dragoste, and
  Carral}]{UKJDC18:VLogSystemDescription}
Urbani, J.; Kr{\"{o}}tzsch, M.; Jacobs, C. J.~H.; Dragoste, I.; and Carral, D.
  2018.
\newblock Efficient Model Construction for {Horn} Logic with {VLog}: {System}
  description.
\newblock In Galmiche, D.; Schulz, S.; and Sebastiani, R., eds.,
  \emph{Proceedings of the 9th International Joint Conference on Automated
  Reasoning (IJCAR'18)}, volume 10900 of \emph{LNCS}, 680--688. Springer.

\end{thebibliography}
\end{document}